\documentclass[a4paper,twocolumn,11pt,accepted=2026-07-02]{quantumarticle}
\pdfoutput=1
\usepackage[utf8]{inputenc}
\usepackage[T1]{fontenc}
\usepackage[english]{babel}
\usepackage{amsmath,amssymb,amsthm}
\usepackage{dsfont}
\usepackage[braket,qm]{qcircuit}
\usepackage{graphicx}
\usepackage{xcolor}
\usepackage[numbers,sort&compress]{natbib}
\usepackage[
    colorlinks=true,
    linkcolor=blue,
    citecolor=blue,
    urlcolor=blue
]{hyperref}
\newtheorem{theorem}{Theorem}
\newtheorem{lemma}[theorem]{Lemma}
\theoremstyle{definition}
\newtheorem{definition}{Definition}
\makeatletter
\newtheorem*{rep@theorem}{\rep@title}
\newcommand{\newreptheorem}[2]{%
\newenvironment{rep#1}[1]{%
 \def\rep@title{#2 \ref{##1}}%
 \begin{rep@theorem}}%
 {\end{rep@theorem}}}
\makeatother

\newreptheorem{theorem}{Theorem}

\begin{document}

\title{Fast quantum measurement tomography with optimal error bounds}

\author{Leonardo Zambrano}
\email{leonardo.zambrano@icfo.eu}
\affiliation{ICFO -- Institut de Ciències Fotòniques, The Barcelona Institute of Science and Technology, 08860 Castelldefels, Barcelona, Spain}

\author{Sergi Ramos-Calderer}

\affiliation{Centre for Quantum Technologies, National University of Singapore, Singapore.}
\affiliation{Quantum Research Center, Technology Innovation Institute, Abu Dhabi, UAE}

\author{Richard Kueng}
\affiliation{Department of Quantum Information and Computation at Kepler (QUICK), Johannes Kepler University Linz, 4040 Linz, Austria}


\begin{abstract}
We present a two-step protocol for quantum measurement tomography that is light on classical co-processing cost and still achieves optimal sample complexity. Given measurement data from a known probe state ensemble, we first apply least-squares estimation to produce an unconstrained approximation of the POVM, and then project this estimate onto the set of valid quantum measurements. For a POVM with $L$ outcomes acting on a $d$-dimensional system, we show that the protocol requires $\mathcal{O}\left((d^3+d^2L)/\epsilon^2\right)$ samples to achieve error $\epsilon$ in worst-case distance, and $\mathcal{O}(d^2 L/\epsilon^2)$ samples in average-case distance. We further establish two matching sample complexity lower bounds of $\Omega((d^3 + d^2 L) /\epsilon^2)$ and $\Omega(d^2 L/\epsilon^2)$ for any non-adaptive, single-copy POVM tomography protocol. Hence, our projected least squares POVM tomography is sample-optimal in both the dimension and the number of outcomes for both distances.
Our method admits an analytic form when using global or local 2-designs as probe ensembles and enables rigorous non-asymptotic error guarantees. Finally, we also complement our findings with empirical performance studies carried out on a noisy superconducting quantum computer with flux-tunable transmon qubits.
\end{abstract}
\maketitle

\section{Introduction}

\begin{figure*}[t!]
    \centering
    \includegraphics[width=0.90\linewidth]{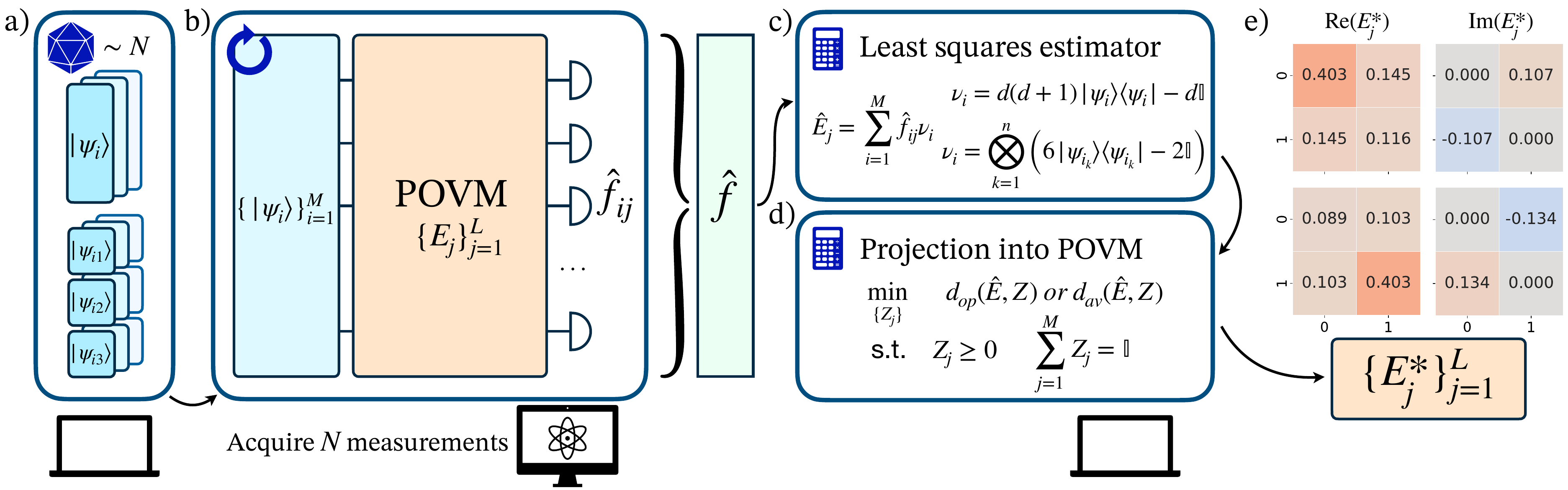}
    \caption{Sketch of the quantum measurement tomography protocol. a) Uniformly sample a set of input states from an IC ensemble $\{\ket{\psi_i}\}$ b) A quantum device is initialized in each randomized input state and measured using the POVM. The resulting frequencies $\hat{f}_{ij}$ for input state $\ket{\psi_i}$ and POVM outcome $E_j$ are stored in a vector $\hat{f}$. c) A least-squares estimator for the POVM is computed using the measured frequencies $\hat{f}$ and the operators $\nu_i$. This step depends on the choice of ensemble, such as a global or local 2-design. d) The raw least squares estimate is projected into the physical subspace of all POVMs 
    via convex optimization. This produces the reconstructed POVM $(E_1^*,\ldots,E_L^*)$ as output. e) Example of the first two elements of a single qubit Symmetric Informationally Complete (SIC-)POVM reconstructed from an experimental implementation on two flux-tunable transmon qubits.}
    \label{fig:fig1}
\end{figure*}

A significant limitation to the practical utility of quantum computers is the presence of errors and decoherence. Without active mitigation or error correction, these effects quickly accumulate and destroy the fragile quantum coherence needed for reliable computation~\cite{temme2017error, preskill2018quantum, endo2021hybrid}. Among the different noise sources, measurement errors play a particularly important role, especially in superconducting qubit platforms where readout is often noisier than gate operations~\cite{kim2023evidence}. Reliable characterization of detector noise is therefore crucial, not only for benchmarking hardware but also for enabling error-mitigation strategies.

Accurate detector characterization has already proven useful in several error-mitigation strategies. One prominent approach is to use the information about the detector to correct outcome probabilities or expectation values of observables, thereby reducing the impact of readout noise on subsequent experiments~\cite{li2017efficient, temme2017error, chen2019detector, maciejewski2020mitigation, bravyi2021mitigating, van2022model, funcke2022measurement, korhonen2025practical}. Detector characterization is also important in more general estimation frameworks, such as classical shadows, where the ideal measurement channel assumed in the protocol can be replaced by a noisy one obtained from the characterization~\cite{huang2020classicalshadow, chen2021robustshadow, koh2022classical, brieger2023cgst}. In both cases, reliable knowledge of the detectors enables more accurate data processing and extends the practical capabilities of noisy quantum devices.

Quantum measurement tomography (QMT), also known as detector tomography, provides the standard framework for detector characterization by reconstructing the unknown measurement, modeled as a positive operator-valued measure (POVM), from experimental data. Several approaches to QMT have been proposed and implemented over the years~\cite{fiuravsek2001maximum, d2004quantum, lundeen2009tomography, feito2009measuring, zhang2012recursive, cattaneo2023self, grandi2017experimental, wang2021two}. These methods collect statistics from a set of known probe states and then infer a POVM via classical co-processing. A common strategy is to fit the POVM by solving a constrained optimization problem that minimizes the discrepancy between the observed frequencies and those predicted by the estimated POVM~\cite{fiuravsek2001maximum, d2004quantum, lundeen2009tomography, feito2009measuring, zhang2012recursive, cattaneo2023self}. While such techniques are widely used and often accurate, they become computationally expensive as the system dimension grows. Additionally, existing convergence guarantees for the estimated POVM are only valid in the asymptotic regime, where the central limit theorem applies. Note that this means that for finite sample sizes, error bars are not necessarily grounded in valid theory. Least-squares estimation (LSE) combined with projection methods offers a less computationally demanding alternative~\cite{grandi2017experimental, wang2021two, xiao2023regularization, nielsen2021gate}, and allows for a simpler derivation of reconstruction error bounds. Importantly, these error bounds also hold in the non-asymptotic setting.

Determining the sample complexity of QMT, that is, the number of POVM uses required to reconstruct a measurement with a given accuracy, is a fundamental challenge. Similar to quantum state tomography \cite{o2016efficient, haah2017sample, kueng2017low, guta2020fast, surawy2022projected, chen2022tight, lowe2022lower, anshu2024survey, zambrano2026quantum}, such analysis enables the development of protocols that minimize the use of physical resources. This is particularly important in scenarios where time, memory, or access to quantum devices is limited. Therefore, studying sample complexity supports the design of optimized protocols, supporting practical applications in quantum computing and the experimental verification of quantum devices.

Given the rapid advances in controllable quantum architectures, it is desirable to develop methods that are optimal in terms of sample complexity and also efficient in classical co-processing cost (think: memory and runtime). 
In this paper, we propose a two-step protocol that separates the estimation and constraint satisfaction processes within QMT to obtain an overall more computationally efficient reconstruction protocol. 
Following data acquisition, we use LSE to obtain an initial estimate of a POVM.  LSE provides a computationally simple method to extract matrices that approximate the POVM, but without enforcing the required physical constraints. In a second step, we project this estimate onto the set of valid POVMs, ensuring that the final reconstructed operator satisfies all necessary conditions.

Our main contribution lies in analyzing the sample complexity of this two-step protocol and proving that its scaling matches fundamental lower bounds from quantum information theory. More precisely, for a POVM on a $d$-dimensional system with $L$ outcomes, we show that our protocol requires $\mathcal{O} \left({(d^3 + d^2 L )}/{\epsilon^2} \right)$ samples to achieve error $\epsilon$ in a worst-case distance, and $\mathcal{O} \left({d^2 L }/{\epsilon^2} \right)$ in an average-case distance. Conversely, we prove that \emph{any} protocol based on independent, non-adaptive measurements must use the POVM at least $\Omega \left({(d^3 + d^2 L)}/{\epsilon^2}\right)$ times to achieve the same error in worst-case distance, and $\Omega \left({d^2 L}/{\epsilon^2}\right)$ in average-case distance. Thus, our protocol achieves optimal sample complexity for both distances.

To evaluate the reconstruction error, we consider two natural distance measures between POVMs. The \emph{operational distance} quantifies the worst-case distinguishability of measurement outcomes and is commonly used in discrimination and error-mitigation tasks~\cite{navascues2014how, puchala2018strategies, maciejewski2020mitigation, maciejewski2023exploring, zambrano2026certification}. In contrast, the \emph{average-case distance} captures the expected total variation distance between output distributions when input states are drawn from a random ensemble~\cite{maciejewski2023exploring}. This measure reflects the typical performance of the POVM in practical scenarios. Importantly, having rigorous guarantees on how these errors scale with the number of samples provides a solid theoretical foundation for error-mitigation strategies~\cite{li2017efficient, temme2017error, chen2019detector, maciejewski2020mitigation, bravyi2021mitigating, van2022model, funcke2022measurement, korhonen2025practical, chen2021robustshadow, koh2022classical, brieger2023cgst}, which often assume precise characterization of the measurement noise without explicitly quantifying the estimation error introduced by finite sample sizes.  {We make this explicit by showing how our tomographic bounds directly control the residual error of a classical noise–mitigation scheme~\cite{maciejewski2020mitigation}.} 

Furthermore, we investigate the practical scalability of our method through numerical simulations, benchmarking it against standard Maximum Likelihood Estimation (MLE) measurement tomography. We show that using an alternating projection algorithm \cite{barbera2025boosting}, our protocol matches the reconstruction accuracy of MLE while maintaining computational efficiency for system sizes where standard optimization solvers become intractable.

Finally, we demonstrate the performance of our protocol on a noisy quantum system, successfully recovering experimentally implemented POVMs on two flux-tunable transmon qubits. By establishing the sample-optimality of our method and validating it in real experimental settings, we provide both rigorous theoretical guarantees and a practical tool for quantum measurement characterization.

\section{Preliminaries}\label{sec:pre}

In quantum mechanics, measurements are commonly described using the framework of positive operator-valued measures, or POVMs for short. For a quantum system of dimension $d$, a POVM $E$ is a collection of linear operators $\{ E_j \}_{j=1}^L$ acting on $\mathbb{C}^d$ that satisfy the conditions
\begin{align}\label{eq:conditions_povm} 
\sum_{j=1}^L E_j = \mathds{1} \qquad \textrm{and} \qquad E_j \geq 0
\end{align} 
for every $j = 1, 2, ..., L$. Here, $E_j \geq 0$ requires that each operator is positive semidefinite, i.e. $E_j^\dagger=E_j$ (Hermitian) and every eigenvalue is nonnegative. 
When a quantum state $\rho$ is measured using $E$, the probability of outcome $j$ is given by the Born rule $p_j = \text{tr}(\rho E_j)$. The conditions in Eq.~\eqref{eq:conditions_povm} ensure that this is a valid probability distribution, regardless of the quantum state $\rho$. 
Note that these probabilities depend on both the state and the POVM in question. Hence, they can be used to perform quantum measurement tomography. In this task, the goal is to reconstruct an unknown POVM using the measurement statistics obtained by applying it to a set of known quantum states $\{ \rho_i \}_{i=1}^N$.

To assess the accuracy of tomography, we need a way to quantify the difference between POVMs. One natural choice is the operational distance $d_{\mathrm{op}}$, defined as the largest total variation distance between the probability distributions that two POVMs can produce when applied to the same quantum state \cite{navascues2014how, puchala2018strategies, maciejewski2020mitigation, maciejewski2023exploring}.
\begin{definition}[Operational distance]
    Let $E$ and $F$ be two $L$-outcome POVMs acting on a Hilbert space of dimension $d$. We define the operational distance $d_{\mathrm{op}}$ as 
\begin{align}
    d_{\mathrm{op}} (E, F) = \max_{\rho \, \in \mathcal{D}(\mathcal{H})} \frac{1}{2} \sum_{j=1}^L | \mathrm{tr}(E_j \rho) -  \mathrm{tr}(F_j \rho) |. \label{eq:dop_l1}
\end{align}
\end{definition}
This is equivalent to
\begin{align}
    d_{\mathrm{op}} (E, F) = \max_{x \in \mathcal{P}(L)} \left\Vert \sum_{k \in x} (E_k - F_k) \right\Vert, \label{eq:dop_linf}
\end{align}
where $\mathcal{P}(L)$ is the power set of $\{1, 2, ..., L\}$ and $\Vert A \Vert$ is the spectral norm of $A$, i.e.\ the largest matrix eigenvalue in modulus. The number $d_\mathrm{op}(E, F) \in [0,1]$ is related to the optimal probability of distinguishing between measurements $E$ and $F$ without using entanglement with a larger auxiliary system. 

In our proofs, we will use
\begin{align}\label{eq:extended_op_norm_main}
    d_{\mathrm{ext}}(X,Y)
    =
    \frac12
    \sup_{a \in \{\pm 1\}^{L}}
    \left\|
        \sum_{j=1}^L a_j (X_j-Y_j)
    \right\|.
\end{align}
where $X=\{X_j\}_{j=1}^{L}$ and $Y=\{Y_j\}_{j=1}^{L}$ for $X_j$ and  $Y_j$ Hermitian. If $E$ and $F$ are POVMs, then $d_{\mathrm{ext}}(E, F)=d_{\mathrm{op}} (E,F)$.

The operational distance is a worst-case distance measure. In practical scenarios, however, it is often more relevant to compare the average-case behavior of quantum measurements. To capture this, we can consider the expected Total Variation (TV) distance between the measurement statistics of two POVMs, $E$ and $F$, when applied to pure states $|\psi\rangle$ drawn from a random ensemble $\nu$
\begin{align}
    \mathrm{TV}_{\mathrm{av}}(E, F) = \mathop{\mathbb{E}}_{|\psi\rangle} \left[ \frac{1}{2} \sum_{i=1}^L \left| \langle \psi | E_i | \psi \rangle - \langle \psi | F_i | \psi \rangle \right| \right]. 
\end{align}
While calculating this expected TV distance directly is computationally challenging, if the ensemble $\nu$ forms at least an approximate state $4$-design, $\mathrm{TV}_{\mathrm{av}}(E, F)$ is proportional to the following compact expression \cite[Theorem~2]{maciejewski2023exploring}:

\begin{definition}[Average case distance]
    Let $E$ and $F$ be two $L$-outcome POVMs acting on a Hilbert space of dimension $d$. We define the average-case distance $d_{\mathrm{av}}$ as
\begin{align}
    d_\mathrm{av} (E, F) = \frac{1}{2d} \sum_{i=1}^L \sqrt{\Vert E_i - F_i \Vert_F^2 + \left( \mathrm{tr}[E_i - F_i] \right)^2},
\end{align}
where $\Vert A \Vert_F = \sqrt{\mathrm{tr}[A^\dagger A]}$ denotes the Frobenius norm.
\end{definition}
This distance is related to the operational distance by $a\,d_{\mathrm{av}}(E,F) \le d_{\mathrm{op}}(E,F) \le d\,d_{\mathrm{av}}(E,F)$, where $a=0.31$~\cite{maciejewski2023exploring}.

Finally, we base our unconditional lower bounds on the sample complexity required for any quantum measurement protocol on Fano's inequality. More precisely, we use the following consequence which is displayed in \cite[Corollary~2.7]{lowe2022lower}.

\begin{lemma}\label{thm:fano}
    Let $X$, $Y$, and $\hat{X}$ be discrete random variables forming a Markov chain $X \rightarrow Y \rightarrow \hat{X}$, where $X$ takes values in $\mathcal{X}$. Suppose Alice sends a message $X \sim \mathrm{Unif}(\mathcal{X})$ and Bob is able to decode the message with constant probability of success using $\hat{X}$. Then, it must hold that
\begin{align}
    I(X : Y) = \Omega(\log |\mathcal{X}|),
\end{align}
where $I$ denotes the mutual information
\end{lemma}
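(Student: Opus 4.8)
The plan is to follow the textbook route for turning a decoding guarantee into a mutual-information lower bound, namely to combine Fano's inequality with the data-processing inequality. First I would introduce the error probability $P_e = \mathrm{Pr}(\hat{X} \neq X)$ and observe that the hypothesis ``Bob decodes with constant probability of success'' says precisely that $1 - P_e \geq c$ for some fixed constant $c > 0$ that does not depend on $|\mathcal{X}|$. The target quantity will then be accessed through the identity $I(X : \hat{X}) = H(X) - H(X \mid \hat{X})$, where $H$ denotes Shannon entropy.

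Next I would use that $X \sim \mathrm{Unif}(\mathcal{X})$, so its entropy is maximal, $H(X) = \log|\mathcal{X}|$. Fano's inequality applied to the estimator $\hat{X}$ gives the conditional-entropy bound $H(X \mid \hat{X}) \leq H_b(P_e) + P_e \log|\mathcal{X}|$, where $H_b$ is the binary entropy function and I have used the slightly looser $\log|\mathcal{X}|$ in place of the sharper $\log(|\mathcal{X}| - 1)$ for cleanliness. Substituting into the mutual-information identity and rearranging yields $I(X : \hat{X}) \geq (1 - P_e)\log|\mathcal{X}| - H_b(P_e)$.

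The final step is to insert the constant-success hypothesis and then transfer the bound from $\hat{X}$ to $Y$. Since $1 - P_e \geq c$ and $H_b(P_e) \leq 1$ is bounded by a universal constant, the right-hand side is at least $c\log|\mathcal{X}| - 1 = \Omega(\log|\mathcal{X}|)$. Because $X \rightarrow Y \rightarrow \hat{X}$ is a Markov chain, the data-processing inequality gives $I(X : Y) \geq I(X : \hat{X})$, and chaining the two bounds establishes the claim. I do not expect a genuine obstacle here, since every ingredient is standard; the only point demanding care is the bookkeeping around the constant success probability, i.e.\ verifying that the $H_b(P_e)$ term is a harmless additive constant absorbed by the $\Omega(\cdot)$ and that the direction of the data-processing inequality is the one that lower-bounds $I(X:Y)$.
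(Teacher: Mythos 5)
Your proposal is correct, and it coincides with the argument behind the paper's statement: the paper does not prove this lemma itself but imports it verbatim from \cite[Corollary~2.7]{lowe2022lower}, whose derivation is exactly your combination of Fano's inequality (applied to the uniform message, giving $I(X:\hat{X}) \geq (1-P_e)\log|\mathcal{X}| - H_b(P_e)$) with the data-processing inequality $I(X:Y) \geq I(X:\hat{X})$ for the Markov chain $X \rightarrow Y \rightarrow \hat{X}$. Your bookkeeping is sound: the binary-entropy term is an additive constant absorbed into the $\Omega(\cdot)$, and the data-processing inequality is applied in the correct direction.
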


\section{Projected least-squares quantum measurement tomography}

Here, we present a protocol to reconstruct any $L$-outcome POVM on a $d$-dimensional system with a sample complexity of $N \in \mathcal{O} \left(\frac{(d^3 + d^2 L)}{\epsilon^2} \right)$ in the operational distance $d_{\mathrm{op}}$ (worst case), and $N \in \mathcal{O} \left(\frac{d^2 L}{\epsilon^2}\right)$ in the average distance $d_{\mathrm{av}}$. A sketch of this protocol, and a sample of reconstructed empirical data are shown in Fig. \ref{fig:fig1}.  

\subsection{Protocol}
Our aim is to characterize a POVM $E$ with $L$ elements  $\left\{ E_1, E_2, \dots, E_L \right\}$ acting on a Hilbert space of dimension $d$. To this end, we collect measurement statistics from a known set of quantum states $\{|\psi_i\rangle\}_{i=1}^M$ . This set is chosen to be informationally complete, meaning that for any POVM element $E_j$, the probabilities $\langle \psi_i |E_j|\psi_i\rangle$ uniquely identify $E_j$. We then perform least-squares estimation to obtain preliminary estimators for the POVM elements, which are subsequently projected onto the set of physical POVMs.

It is worthwhile to point out that the least-squares estimation step can be computed analytically if the ensemble of probe states $\{|\psi_i\rangle\}_{i=1}^M$ either forms a global $2$-design or is a tensor product of single-qubit $2$-designs. This $2$-design probe ensemble is distinct from the 4-design ensemble used to  motivate $d_{\mathrm{av}}$ in Sec.~\ref{sec:pre}. While relating $d_{\mathrm{av}}$ to the average TV distance requires a 4-design, the tomographic reconstruction itself only requires a 2-design to analytically invert the measurement map and bound the estimator's variance. If the ensemble is a tensor product of single-qubit $2$-designs, each state is of the form $|\psi_i \rangle = |\psi_{i_1} \rangle \otimes |\psi_{i_2} \rangle \otimes \dots \otimes |\psi_{i_n} \rangle$, where the states $\{|\psi_{i_k} \rangle \}$ of each qubit $k = 1, \dots, n$ form a $2$-design. Examples of $2$-designs include stabilizer states \cite{dankert2009exact}, states from a symmetric informationally complete POVM (SIC-POVM) \cite{renes2004symmetric}, and states forming a maximal set of mutually unbiased bases (MUBs) \cite{klappenecker2005mutually}.
We propose the following estimation protocol:

\begin{enumerate}
    \item \textit{Data collection}. We uniformly sample a state from the set $\{| \psi_i \rangle \}_{i=1}^{M}$, prepare it on the quantum device and then measure it using the POVM. This procedure is repeated $N$ times. Let $N_{ij}$ denote the number of shots in which the sampled input state was
    $|\psi_i\rangle$ and the observed POVM outcome was $j$. We define $\hat f_{ij}=N_{ij}/N$. Then $\hat f_{ij}$ is an unbiased estimator of the joint probability
    \begin{align}
        p_{ij}
        =
        \frac{1}{M}\langle\psi_i|E_j|\psi_i\rangle.
    \end{align}
    \item \textit{Least squares estimator}. We compute empirical estimators $\hat{E}_j$ of each POVM element separately as
    \begin{align}\label{eq:lse_estimator}
        \hat{E}_j = \sum_{i=1}^M \hat{f}_{ij} \nu_i, 
    \end{align}
    where 
    \begin{align}\label{eq:global_frame}
        \nu_i =  d(d+1) |\psi_i \rangle \langle \psi_i | - d \mathds{1}
    \end{align}
    if the set of states $\{|\psi_i \rangle\}_{i=1}^M$ forms a $2$-design (see App.~\ref{app:2-designs}) or
    \begin{align}\label{eq:local_frame}
        \nu_i = \bigotimes_{k=1}^n\left(6 | \psi_{i_k} \rangle \langle \psi_{i_k} | -  2 \mathds{1}\right)
    \end{align}
    if the set of states is a tensor product of single-qubit $2$-designs (see App.~\ref{app:single_qubit-2-designs}). The elements from Eq.~$\eqref{eq:lse_estimator}$ form an unbiased estimator for the POVM, denoted as $\hat{E}$.
    \item \textit{Projection onto a POVM}. The bare least squares estimator $\hat{E} = \left\{\hat{E}_1,\ldots,\hat{E}_L \right\}$ will typically not meet the conditions of a physical POVM.  
    To obtain a physical object we solve the following convex optimization problem  \cite{vandenberghe1996semidefinite}
    \begin{align}\label{eq:proj_povm}
    \underset{ \{Z_1,\ldots,Z_L\} \subset \mathbb{C}^{d \times d}}{\text{minimize}}  \; & \quad d_{\mathrm{ext}}(\hat{E}, Z) \: \textrm{ or } \: d_{\mathrm{av}} (\hat{E}, Z) \nonumber\\
    \textrm{subject to}  & \quad Z_j^\dagger = Z_j, \quad Z_j \geq 0\\
    & \quad \sum_{j=1}^L Z_j = \mathds{1}. \nonumber
    \end{align}
    We output the optimal solution (argmin) $\{E_j^*\}_{j=1}^L$ of this optimization problem as the estimator of the POVM in the corresponding distance. 
\end{enumerate}

The LSE in Eqs. \eqref{eq:global_frame} and \eqref{eq:local_frame} are related to single-shot estimators in quantum state tomography and classical shadows \cite{guta2020fast, huang2020classicalshadow}. Due to the symmetry of the Born rule, probing an unknown POVM with random states is the dual of probing an unknown state with random measurements. Consequently, our estimators $\nu_i$ are exactly $d$ times the shadow estimators. This factor of $d$ accounts for the differing trace constraint, $\mathrm{tr}(\rho) = 1$ versus $\sum_{j=1}^L \mathrm{tr}(E_j) = d$.

\subsection{Error analysis for operational distance $d_{\mathrm{op}}$}
  Now, we will show that the worst-case error $d_{\mathrm{op}}(E, E^*)$ can be controlled by the sample size $N$.

\begin{theorem}\label{thm:sc_dop}
    Assume an $L$-outcome POVM $E$ acting on a Hilbert space of dimension $d$, and an IC set of states $\{|\psi_i\rangle\}_{i=1}^{M}$. A reconstruction error $d_{\mathrm{op}}(E, E^*) \leq \epsilon$ with probability $1 - \delta$ can be achieved using the protocol described above with sample size
    \begin{align}\label{eq:N_global_op_main}
        N &\geq \frac{8\left(d^2+\epsilon(d^2+1)/6\right)}{\epsilon^2} \ln \left(\frac{2^{L+1}9^{2d}}{\delta}\right)
    \end{align}
    when $\{|\psi_i\rangle\}_{i=1}^{M}$ forms a $2$-design, or
    \begin{align}\label{eq:N_local_op_main}
        N &\geq  \frac{8\left(6^n+\frac{\epsilon}{6}(4^n+1)\right)}{\epsilon^2}\ln \left(\frac{2^{L+1}9^{2d}}{\delta}\right).
    \end{align}
    if the POVM acts on an $n$-qubit system and $\{|\psi_i\rangle\}_{i=1}^{M}$ is a tensor product of single-qubit $2$-designs.
\end{theorem}

These bounds imply that the sample complexity of our protocol scales as $\mathcal{O}\left( (d^3 + d^2 L) / \epsilon^2 \right)$ when using global 2-designs, and as $\mathcal{O}\left( (12^n + 6^n L)  / \epsilon^2 \right)$ for tensor product of 2-designs. The number of degrees of freedom in an $L$-outcome POVM on a $d$-dimensional Hilbert space is $\Omega(d^2 L)$. Thus the bound matches the parameter-count scaling whenever $L\gtrsim d$, while for few-outcome measurements it contains an additional $d^3$ term.

\begin{proof}
We first control the error in least-squares estimator. Fix a sign vector $a=(a_1,\ldots,a_L)\in\{-1,+1\}^L$ and define the signed fluctuation
\begin{align}
    A_a = \sum_{j=1}^L a_j(\hat E_j-E_j).
\end{align}
The vector $a$ is fixed throughout the following scalar concentration
argument. Later we take a union bound over all $2^L$ choices of $a$.

Fix a unit vector $|u\rangle\in\mathbb C^d$ and
$a\in\{\pm1\}^L$. Let
\begin{align}
    g_i(u) =\langle u|\nu_i|u\rangle .
\end{align}
For this fixed $u$ and fixed $a$, define the real-valued random variable $Z_{u,a}$ such that
\begin{align}
    Z_{u,a}(i,j) = a_j g_i(u).
\end{align}
with probability $p_{ij} = \frac{1}{M} \langle\psi_i|E_j|\psi_i\rangle$. 

The empirical mean of $Z_{u,a}$ is
\begin{align}
    \sum_{i=1}^M\sum_{j=1}^L
    \hat f_{ij} Z_{u,a}(i,j)
    &=
    \sum_{i=1}^M\sum_{j=1}^L
    \hat f_{ij}a_j\langle u|\nu_i|u\rangle
    \nonumber \\
    &=
    \langle u |\left(
    \sum_{j=1}^L a_j\hat E_j
    \right) |u\rangle .
\end{align}

The expectation of $Z_{u,a}$ is
\begin{align}
    \mathbb E [Z_{u,a}]
    &=
    \sum_{i=1}^M\sum_{j=1}^L
    p_{ij}a_j\langle u|\nu_i|u\rangle
    \nonumber \\
    &=
    \sum_{j=1}^L a_j
    \langle u| \left(
    \frac{1}{M}\sum_{i=1}^M
    \langle\psi_i|E_j |\psi_i \rangle \nu_i \right)
     |u \rangle \nonumber\\
     & =  \langle u| \left(
    \sum_{j=1}^L a_jE_j
    \right)|u\rangle.
\end{align}
Hence, the difference between these two quantities is
\begin{align}\label{eq:quad_form_freqs_clean}
    \langle u|A_a|u\rangle
    =
    \sum_{i=1}^M\sum_{j=1}^L
    (\hat f_{ij}-p_{ij})a_jg_i(u).
\end{align}

By scalar Bernstein's inequality, for every fixed unit vector $u$ and every
fixed sign vector $a$,
\begin{align}\label{eq:scalar_bernstein_main_clean}
    \Pr\left(
    |\langle u|A_a|u\rangle|\geq t
    \right)
    \leq
    2\exp\left(
    -
    \frac{N t^2}
    {2(v^2+Kt/3)}
    \right).
\end{align}
The Bernstein parameters for $Z_{u,a}$ are computed in
Appendices~\ref{app:global_2design_scalar_bernstein} and
\ref{app:local_2design_scalar_bernstein}. Namely,
\begin{align}
        \operatorname{Var}(Z_{u,a})\le v^2,
    \qquad
    |Z_{u,a}-\mathbb E Z_{u,a}|\le K,
\end{align}
with $(v^2,K)=(d^2,d^2+1)$ for a global $2$-design and $(v^2,K)=(6^n,4^n+1)$ for a tensor product of single-qubit $2$-designs.

We now pass to operator norm. Let $\mathcal N$ be
a $1/4$-net of the unit sphere of $\mathbb C^d$. We choose it such that $|\mathcal N| \leq 9^{2d}$ \cite{vershynin_2018}. For every Hermitian matrix $B$, the standard net estimate gives \cite{vershynin_2018}
\begin{align}\label{eq:net_operator_norm_main_clean}
    \|B\|
    \leq
    2\max_{u\in\mathcal N}
    |\langle u|B|u\rangle|.
\end{align}

Apply Eq.~\eqref{eq:scalar_bernstein_main_clean} with $t=\epsilon/2$, and
take a union bound over all $u\in\mathcal N$ and all
$a\in\{-1,+1\}^L$. The failure probability is at most
\begin{align}
    2^{L+1}9^{2d}
    \exp\left(
    -
    \frac{N\epsilon^2}
    {8(v^2+K\epsilon/6)}
    \right).
\end{align}
The sample-size assumptions in Eqs.~\eqref{eq:N_global_op_main} and
\eqref{eq:N_local_op_main} make this quantity at most $\delta$. Hence, with
probability at least $1-\delta$,
\begin{align}
    \max_{u\in\mathcal N}
    |\langle u|A_a|u\rangle|
    \leq
    \frac\epsilon2
\end{align}
simultaneously for every sign vector $a\in\{-1,+1\}^L$. By
Eq.~\eqref{eq:net_operator_norm_main_clean}, this implies
\begin{align}
    d_{\mathrm{ext}}(E, \hat{E})
    =
    \frac{1}{2}\max_{a\in\{-1,+1\}^L}
    \left\|
    \sum_{j=1}^L a_j(\hat E_j-E_j)
    \right\| 
    \leq
    \frac{\epsilon}{2}.
\end{align}

It remains to use the projection step. Since $E$ is itself a feasible POVM,
the definition of $E^*$ gives $d_{\mathrm{ext}}(E^*, \hat{E})
    \leq
    d_{\mathrm{ext}}(E, \hat{E})$. Using the triangle inequality,
\begin{align}
    d_{\mathrm{ext}}(E, E^*)
    &\leq
    d_{\mathrm{ext}}(E^*, \hat{E})
    +
    d_{\mathrm{ext}}(\hat{E}, E )
    \nonumber\\
    &\leq
    2 d_{\mathrm{ext}}(\hat{E}, E )
    \leq
    \epsilon.
\end{align}
Finally, $E$ and $E^*$ are POVMs. Therefore the extended operational norm
coincides with the operational distance:
\begin{align}
    d_\mathrm{op}(E,E^*)
    =
    d_{\mathrm{ext}}(E, E^*)
    \leq
    \epsilon.
\end{align}
This proves the theorem.
\end{proof}

Projecting $\hat{E}$ onto the set of POVMs using $d_{\mathrm{ext}}$ may become computationally demanding when $L$ is large. As a practical alternative, one can perform the projection using any distance that upper bounds $d_{\mathrm{ext}}$ \cite{wang2021two, barbera2025boosting}. A natural choice is $d_\infty = \frac{1}{2}\min_{Z \in \mathrm{ POVM} } \sum_{i=1}^L \Vert \hat{E}_i -  Z_i \Vert$. Then, 
\begin{align}
    d_{\mathrm{op}}(E, E_\infty^*) &\leq d_{\mathrm{ext}}(E, \hat{E}) + d_{\mathrm{ext}}(E_\infty^*, \hat{E})\\
    &\leq  \frac{\epsilon}{2} + d_\infty. 
\end{align}
$d_\infty$ can be calculated after the protocol is run, and in practice is of the order of $\epsilon$. 
\subsection{Error analysis for average distance $d_{\mathrm{av}}$}
  A similar calculation can be done to bound the error in the average distance $d_{av}$:

\begin{theorem}\label{thm:upper_bound_d_av}
    Assume an $L$-outcome POVM $E$ acting on a Hilbert space of dimension $d$ and an IC set of states $\{|\psi_i \rangle \}_{i=1}^{M}$. A reconstruction error $d_{\mathrm{av}}(E, E^*) \leq \epsilon$ with probability $1-\delta$ can be achieved using the protocol described above with sample size
    \begin{align}
        N &\geq \frac{ 8L\left(d^2 + d + {d\epsilon}/{3\sqrt{L}} \right)}{ \epsilon^2}  \ln \left(\frac{4}{\delta}\right)
    \end{align}
    when $\{|\psi_i\rangle \}_{i=1}^{M}$ forms a $2$-design, or
    \begin{align}
        N \geq \frac{8L\left(5^n + {5^{n/2}\epsilon}/{3\sqrt{L}}\right)}{\epsilon^2}  \ln \left(\frac{4}{\delta}\right)
    \end{align}
    if the POVM acts on an $n$-qubit system and $\{|\psi_i\rangle \}_{i=1}^{M}$ is a tensor product of single-qubit $2$-designs.
\end{theorem}

These sample complexity guarantees scale as order $\mathcal{O}(d^2 L / \epsilon^2)$ in the global 2-design setting, and as $\mathcal{O}(5^n L / \epsilon^2)$ when using local 2-designs. Compared to the worst-case (operational) distance, this represents a reduction in sample complexity by a factor of $d$.  The detailed proof is given in App.~\ref{app:proof_thm_4}.

\section{Minimum sample complexity}

Here, we provide a lower bound for the sample complexity of non-adaptive measurement tomography considering the worst-case distance $d_{\mathrm{op}}$. The proof is based on a discretization of the problem which allows us to relate quantum measurement tomography to the problem of discrimination of well-separated POVMs.  We first construct a set of $R \in \exp(\Omega(d^2))$ POVMs on dimension $d$ that are $\epsilon/8$ apart in operational distance from each other. We then encode a random message using this set and decode it using measurement tomography with sufficient precision. From Fano's inequality, this gives us a lower bound $\Omega(d^2)$ for the mutual information between the encoder and decoder. Additionally, we obtain an upper bound $\mathcal{O}(N \epsilon^2 / d)$ for the mutual information between the parties after  {$N$} uses of the POVM. Using these two results, we derive a bound $N \geq \Omega(d^3/\epsilon^2)$ for the sample complexity of any non-adaptive tomographic procedure. These techniques have been previously employed in quantum information theory and are standard tools in the statistical estimation literature \cite{scarlett2019introductory, flammia2012quantum, roth2018recovering, haah2017sample, huang2020classicalshadow, lowe2022lower}.

\subsection{Construction of an $\epsilon$-packing}

  Here, we will demonstrate the existence of a set of $R \in \exp (\Omega (d^2 ))$ POVMs that are at least $\epsilon/8$ apart in operational distance. Each POVM $E_{U_x}$ in this set has $L+2$ elements of the form
\begin{align}\label{eq:POVM}
E_{U_x}^{k} &= \frac{1}{2L} \mathds{1} \quad \textrm{for} \quad k=1, \dots, L \nonumber\\
E_{U_x}^{L+1} &= \frac{(1 +\epsilon)}{4} \mathds{1} - \frac{\epsilon}{2} U_x P U_x^\dagger \nonumber \\
E_{U_x}^{L+2} &= \frac{(1 - \epsilon)}{4} \mathds{1} + \frac{\epsilon}{2} U_x P U_x^\dagger,
\end{align}
where $U_x$ is a unitary operator, $P$ a projector of rank $d/2$ and $0 \leq \epsilon \leq 1/2$.

To construct this packing, we require the following set of unitaries \cite{lowe2022lower}:

\begin{lemma}\label{thm:unitaries}
    There exists a set of $R \in \exp(\Omega (d^2))$ unitaries such that for any $U_i \neq U_j$ in the set, 
    \begin{align}
        \frac{1}{d} \Vert U_i P U_i^\dagger - U_j P U_j^\dagger \Vert_1 \geq \frac{1}{4},  
        \end{align}
    where $P$ is a fixed projector of rank $d/2$.
\end{lemma}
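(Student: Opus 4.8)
The plan is to realize the desired unitaries through a probabilistic packing argument on the orbit of $P$ under unitary conjugation. Since $P$ is a rank-$d/2$ projector, every $U P U^\dagger$ is again a rank-$d/2$ projector, and the collection of such operators is the complex Grassmannian $\mathrm{Gr}(d/2,d)$, a manifold of real dimension $d^2/2$. Endowed with the normalized trace distance $D(U,V) = \tfrac{1}{d}\Vert U P U^\dagger - V P V^\dagger\Vert_1$, this is a compact metric space, and a manifold of dimension $\Theta(d^2)$ should admit a constant-separated packing of cardinality $\exp(\Omega(d^2))$. First I would make this heuristic rigorous by drawing the unitaries independently from the Haar measure on $U(d)$ and showing that two such samples are $1/4$-far apart with overwhelming probability.

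The core estimate is therefore a tail bound on $D$ for two Haar-random projectors. I would fix one reference projector $Q = V P V^\dagger$ and study the overlap $g(U) = \mathrm{tr}(U P U^\dagger Q)$ as a function of Haar-random $U$. Averaging over the Haar measure via $\mathbb{E}[U P U^\dagger] = \tfrac{\mathrm{tr}(P)}{d}\mathds{1}$ gives $\mathbb{E}\,g(U) = \mathrm{tr}(Q)\,\mathrm{tr}(P)/d = d/4$, reflecting that two random $(d/2)$-dimensional subspaces overlap on average like $d/4$. The key deterministic inequality is that for two rank-$r$ projectors the principal-angle (Jordan) decomposition yields $\Vert P'-Q\Vert_1 = 2\sum_i \sin\theta_i \ge 2\sum_i \sin^2\theta_i = 2\bigl(r - \mathrm{tr}(P'Q)\bigr)$, using $\sin\theta_i \ge \sin^2\theta_i$; with $r = d/2$ this reads $\Vert U P U^\dagger - Q\Vert_1 \ge d - 2 g(U)$. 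Hence it suffices to show that $g(U)$ rarely exceeds $3d/8$, since $g(U)\le 3d/8$ forces $D \ge 1/4$.

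To control the upper tail of $g(U)$ I would invoke concentration of measure on the unitary group. The map $U\mapsto U P U^\dagger$ is $2$-Lipschitz from $(U(d),\Vert\cdot\Vert_F)$ into itself, and pairing against $Q$ (with $\Vert Q\Vert_F = \sqrt{d/2}$) shows that $g$ is $L$-Lipschitz in the Frobenius metric with $L = \sqrt{2d}$. Because $U(d)$ has Ricci curvature bounded below on the order of $d$, Levy's lemma gives the sub-Gaussian bound $\mathrm{Pr}\bigl(|g(U)-d/4| > s\bigr) \le 2\exp\bigl(-c\,d\,s^2 / L^2\bigr)$; taking the deviation $s = d/8$ collapses the exponent to $-\Omega(d^2)$. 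Thus any two independent Haar unitaries satisfy $D \ge 1/4$ except with probability $p = \exp(-\Omega(d^2))$.

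Finally I would assemble the packing by the standard expurgation (alteration) step. Sampling $R$ independent Haar unitaries, the expected number of pairs violating $D \ge 1/4$ is at most $\binom{R}{2}\,p$. Choosing $R = \exp(c'd^2)$ with $c'$ small enough that $\binom{R}{2}\,p \le R/2$, there is a realization in which at most $R/2$ pairs are bad; deleting one endpoint from each bad pair leaves a set of size $\ge R/2 \in \exp(\Omega(d^2))$ that is pairwise $1/4$-separated in $D$, which proves the claim. I expect the main obstacle to be the quantitative bookkeeping in the concentration step: one must verify that the Lipschitz constant of $g$ genuinely scales as $\sqrt{d}$ (rather than $O(1)$), so that the constant-order separation of the \emph{normalized} distance costs $\exp(-\Omega(d^2))$ and not merely $\exp(-\Omega(d))$, and then check that this exponent dominates the $\binom{R}{2}$ union bound for $R$ as large as $\exp(\Omega(d^2))$.
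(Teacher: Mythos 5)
Your proof is correct, and it is essentially the same argument that underlies this lemma in the literature: the paper itself does not prove the statement but imports it from Ref.~\cite{lowe2022lower}, where it is established by exactly this route (Haar-random unitaries, a Lipschitz/concentration-of-measure estimate on the unitary group giving failure probability $\exp(-\Omega(d^2))$ per pair, then a probabilistic packing argument); your variants---concentrating the overlap $\mathrm{tr}(U P U^\dagger Q)$ and converting via $\Vert P' - Q \Vert_1 \geq 2\bigl(r - \mathrm{tr}(P'Q)\bigr)$ rather than concentrating the trace distance directly, and expurgation instead of a plain union bound---are cosmetic. One small point to tighten: $U(d)$ has a flat central $U(1)$ direction, so the "Ricci curvature of order $d$" claim holds for $SU(d)$ rather than $U(d)$; this is harmless here because $g(U)=\mathrm{tr}(U P U^\dagger Q)$ is phase-invariant (so you may work on $SU(d)$ or $PU(d)$), or you can simply invoke the concentration inequality for Haar unitaries that the paper already quotes in its appendix (the theorem attributed to Ref.~\cite{Meckes2013spectral}), which is stated for $U(d)$ with exponent $-d t^2/(12\kappa^2)$ and yields the same $\exp(-\Omega(d^2))$ tail.
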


We will use this to prove the following

\begin{lemma}\label{thm:packing_dop}
    There exists a set of $R \in \exp(\Omega (d^2))$ POVMs that are $\epsilon/8$ apart in operational distance $d_{\mathrm{op}}$ from each other.
\end{lemma}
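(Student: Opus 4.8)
The plan is to take the family of $R \in \exp(\Omega(d^2))$ unitaries $\{U_x\}$ furnished by Lemma~\ref{thm:unitaries} and assign to each one the $(L+2)$-outcome POVM $E_{U_x}$ defined in Eq.~\eqref{eq:POVM}. First I would verify that every $E_{U_x}$ is a legitimate POVM: summing the $L+2$ elements telescopes to $\mathds{1}$, and since $U_x P U_x^\dagger$ is a rank-$d/2$ projector with eigenvalues in $\{0,1\}$, the two nontrivial elements $E_{U_x}^{L+1}$ and $E_{U_x}^{L+2}$ have spectra contained in $\{\tfrac{1-\epsilon}{4}, \tfrac{1+\epsilon}{4}\}$, hence are positive semidefinite whenever $0 \le \epsilon \le 1/2$.

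The heart of the argument is to evaluate $d_{\mathrm{op}}(E_{U_i}, E_{U_j})$ via the $\ell_\infty$ form in Eq.~\eqref{eq:dop_linf}. The key structural feature is that the first $L$ elements equal $\tfrac{1}{2L}\mathds{1}$ for every choice of $U_x$, so they cancel in the difference and contribute nothing to any subset sum. Writing $\Delta = U_i P U_i^\dagger - U_j P U_j^\dagger$, the only surviving differences are $E_{U_i}^{L+1}-E_{U_j}^{L+1} = -\tfrac{\epsilon}{2}\Delta$ and $E_{U_i}^{L+2}-E_{U_j}^{L+2} = +\tfrac{\epsilon}{2}\Delta$. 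Because these are negatives of one another, a subset $x \in \mathcal{P}(L+2)$ containing both indices yields zero, whereas one containing exactly one of them yields spectral norm $\tfrac{\epsilon}{2}\|\Delta\|$; maximizing over all subsets therefore gives the clean identity $d_{\mathrm{op}}(E_{U_i}, E_{U_j}) = \tfrac{\epsilon}{2}\|\Delta\|$.

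The main obstacle is the mismatch of norms: Lemma~\ref{thm:unitaries} controls the \emph{trace} norm $\|\Delta\|_1$, while the computation above produces the \emph{spectral} norm $\|\Delta\|$. I would bridge this with the elementary bound $\|A\| \ge \|A\|_1/d$ valid for any $d\times d$ matrix (since $\|A\|_1 = \sum_k \sigma_k \le d\,\|A\|$). Combining this with the packing estimate $\|\Delta\|_1 \ge d/4$ from Lemma~\ref{thm:unitaries} gives $\|\Delta\| \ge 1/4$, so that for every pair $U_i \neq U_j$,
\[
d_{\mathrm{op}}(E_{U_i}, E_{U_j}) = \frac{\epsilon}{2}\|\Delta\| \ge \frac{\epsilon}{8}.
\]
Since the assignment $U_x \mapsto E_{U_x}$ yields $R \in \exp(\Omega(d^2))$ distinct POVMs, this produces the desired $\epsilon/8$-packing. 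The only delicate point is the trace-to-spectral conversion, where the crude dimension bound on the rank of $\Delta$ is precisely what makes the constants align; no finer estimate is needed.
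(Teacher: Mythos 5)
Your proposal is correct and follows essentially the same route as the paper: build the POVMs of Eq.~\eqref{eq:POVM} from the unitaries of Lemma~\ref{thm:unitaries}, observe that $d_{\mathrm{op}}(E_{U_i},E_{U_j}) = \tfrac{\epsilon}{2}\Vert U_iPU_i^\dagger - U_jPU_j^\dagger\Vert$, and convert trace norm to spectral norm via $\Vert A\Vert \ge \Vert A\Vert_1/d$. In fact you supply two details the paper leaves implicit --- the verification that the $E_{U_x}$ are valid POVMs and the subset-cancellation argument justifying the exact equality for $d_{\mathrm{op}}$ --- both of which are correct.
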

\begin{proof}
From Lemma \ref{thm:unitaries} we fix a set of $R \in \exp(\Omega (d^2))$ unitaries to construct a set of POVMs in $\exp(\Omega(d^2))$ of the form of Eq.~\eqref{eq:POVM}.  Then all the POVMs are at least $\epsilon/8$ apart from each other:
\begin{align}
    d_{\mathrm{op}} (E_{U_i}, E_{U_j}) = & \frac{\epsilon}{2} \left\Vert  U_i P U_i^\dagger - U_j P U_j^\dagger \right\Vert \nonumber  \\
        \geq & \frac{\epsilon}{2d} \left\Vert  U_i P U_i^\dagger - U_j P U_j^\dagger \right\Vert_1 \nonumber \\
        \geq & \frac{\epsilon}{8}. 
\end{align}
\end{proof}

In the proof of the lower bound on the sample complexity, each POVM $E_{U_x}$ of the form of Eq.~\eqref{eq:POVM} will encode a uniform random message $X$ that will be decoded using the information given by the measurement outcomes $Y$.

\subsection{Upper bound on mutual information}

To prove the upper bound for the mutual information, we will first bound it in terms of the $\chi^2$-divergence \cite{lowe2022lower}:

\begin{lemma}\label{thm:upper_bound_mi_divergence}
    Let $X$ be an arbitrary random variable and $Z$ a discrete random variable. Then,
    \begin{align}
        I (X:Z) \leq \frac{1}{\ln (2)} \left(\sum_{z} \mathop{\mathbb{E}}_{X \sim p_X} \frac{p_{Z|X} (z)^2}{p_Z(z)} -1\right),
    \end{align}
    where $p_{Z}$ is the probability distribution of $Z$ and $p_{Z|X}$ is the probability distribution of $Z$ conditioned on the event $X = x$. 
\end{lemma}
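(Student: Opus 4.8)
The plan is to recognize the claimed right-hand side as $\tfrac{1}{\ln 2}$ times the $\chi^2$-divergence between the conditional law $p_{Z|X}$ and the marginal $p_Z$, averaged over $X$, and to exploit the standard fact that mutual information is an averaged Kullback--Leibler divergence that is dominated by the $\chi^2$-divergence. Concretely, I would start from the definition
\begin{equation}
I(X:Z) = \sum_{x,z} p_X(x)\, p_{Z|X}(z|x)\, \log_2\frac{p_{Z|X}(z|x)}{p_Z(z)},
\end{equation}
which is exactly $\mathbb{E}_{X\sim p_X}\,D_{\mathrm{KL}}\!\left(p_{Z|X}\,\|\,p_Z\right)$, the expected relative entropy of the conditional distribution with respect to the marginal.

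The single analytic ingredient is the elementary pointwise bound $\log_2 t \le (t-1)/\ln 2$, valid for all $t>0$ and following immediately from $\ln t \le t-1$. Applying it termwise with $t = p_{Z|X}(z|x)/p_Z(z)$ gives
\begin{align}
I(X:Z) &\le \frac{1}{\ln 2}\sum_{x,z} p_X(x)\, p_{Z|X}(z|x)\left(\frac{p_{Z|X}(z|x)}{p_Z(z)}-1\right) \nonumber\\
&= \frac{1}{\ln 2}\left(\sum_{x,z} p_X(x)\,\frac{p_{Z|X}(z|x)^2}{p_Z(z)} - 1\right),
\end{align}
where the subtracted linear term collapses to exactly $1$ because $\sum_{x,z} p_X(x)\,p_{Z|X}(z|x)=\sum_x p_X(x)=1$ by normalization. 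Reorganizing the double sum as $\sum_z \mathbb{E}_{X\sim p_X}\bigl[p_{Z|X}(z)^2/p_Z(z)\bigr]$ then yields the claimed inequality.

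There is essentially no hard step here; the only things to watch are the bookkeeping of the logarithm base---carrying the factor $1/\ln 2$ throughout so the final constant matches the statement---and checking that the normalization identity eliminates the linear term exactly, rather than merely bounding it. Should one prefer to surface the $\chi^2$-divergence explicitly, an equivalent route applies Jensen's inequality to the concave logarithm to obtain $D_{\mathrm{KL}}(P\|Q)\le \ln\!\left(\sum_z p(z)^2/q(z)\right)/\ln 2 \le \chi^2(P\|Q)/\ln 2$ and then averages over $X$; this reaches the same bound but needs the additional estimate $\ln(1+u)\le u$, so the direct termwise argument above is the cleaner one to present.
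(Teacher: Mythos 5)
Your proof is correct. The paper does not actually prove this lemma itself --- it is quoted directly from Ref.~\cite{lowe2022lower} --- and your argument (writing $I(X:Z)$ as the averaged Kullback--Leibler divergence $\mathbb{E}_{X}\,D_{\mathrm{KL}}(p_{Z|X}\,\|\,p_Z)$, bounding it termwise via $\ln t \le t-1$, and using normalization to collapse the linear term to exactly $1$) is precisely the standard derivation behind this $\chi^2$-type bound, so it serves as a faithful self-contained substitute for the citation. The only caveat is notational: the lemma allows $X$ to be an arbitrary (not necessarily discrete) random variable, so the sums over $x$ in your intermediate displays should be read as expectations $\mathbb{E}_{X \sim p_X}$; your final expression already uses that form, the pointwise inequality is insensitive to the change, and the degenerate cases $p_{Z|X}(z|x)=0$ (with the convention $0\log 0 = 0$) and $p_Z(z)=0$ (which forces $p_{Z|X}(z|x)=0$ almost surely) are handled by the termwise bound, so nothing breaks.
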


Measuring a set of states $\{ \rho_i \}_{i=1}^N$ using a random POVM $E_{U_X}$ in our set of POVMs defines a random variable $Y = (Y_1, \dots Y_N)$ such that the probability of outcome $y_i \in [L+2]$ in round $i$ is given by $p (Y_i = y_i ) = \mathrm{tr} \left(E_{U_X}^{y_i} \rho_i  \right)$. 
Computing the mutual information $I(X : Y)$ is challenging because we do not have an explicit description of the POVMs in our construction. The next result, adapted from Proposition 4.3 in Ref.~\cite{lowe2022lower}, allows us to replace our specific ensemble of POVMs with one defined using Haar-random unitaries in the analysis of mutual information:
 \begin{lemma}\label{thm:upper_bound_mutual_information_haar}
     Let $X \sim Unif[R]$, $U$ be a Haar random unitary and $\{\rho_i\}_{i=1}^N$ a set of $N$ quantum states. Then, there exists a set of $R \in \exp(\Omega (d^2))$ POVMs of the form of Eq. \eqref{eq:POVM} which forms an $\epsilon/8$ packing and satisfies 
     \begin{align}
     I(X: Y) \leq I(U: Z), 
     \end{align}
     where $Y = (Y_1, \dots Y_N)$ is the outcome of measuring $\{\rho_i\}_{i=1}^N$ with a random POVM $E_{U_X}$ and $Z= (Z_1, \dots Z_N)$ is the outcome of measuring $\{\rho_i\}_{i=1}^N$ with a random POVM $E_{U}$. 
 \end{lemma}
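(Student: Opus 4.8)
The plan is to build the packing probabilistically: rather than committing to a fixed family of unitaries, I would draw $U_1,\ldots,U_R$ independently from the Haar measure and argue, via the probabilistic method, that with positive probability the resulting POVMs $\{E_{U_x}\}$ of the form of Eq.~\eqref{eq:POVM} simultaneously form a valid $\epsilon/8$ packing \emph{and} satisfy $I(X:Y) \leq I(U:Z)$. The starting observation is that, conditioned on $X=x$, the outcomes are independent across the $N$ rounds, so the conditional law of $Y$ is the product $q_{U_x}(y) = \prod_{i=1}^N \mathrm{tr}(E_{U_x}^{y_i}\rho_i)$; the same product formula with a Haar-random $U$ gives the conditional law $q_U$ of $Z$. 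This reduces everything to comparing the discrete mixture $\frac{1}{R}\sum_x q_{U_x}$ with the Haar mixture $\mathbb{E}_U q_U$.

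The core of the argument is a concavity estimate for the averaged mutual information. Writing $I(X:Y) = H(Y) - H(Y\mid X)$ with ordinary (discrete) Shannon entropies, the conditional term averages \emph{exactly} over the random choice of packing: since each $U_x$ is Haar distributed, $\mathbb{E}\,H(Y\mid X) = \mathbb{E}_U\,H(q_U) = H(Z\mid U)$. For the marginal term, note that $p_Y = \frac{1}{R}\sum_x q_{U_x}$ is an empirical average of i.i.d.\ Haar draws whose mean is precisely the Haar marginal $p_Z = \mathbb{E}_U q_U$; concavity of the entropy together with Jensen's inequality then yields $\mathbb{E}\,H(Y) \leq H\!\left(\mathbb{E}\,p_Y\right) = H(p_Z) = H(Z)$. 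Combining these two estimates gives $\mathbb{E}\,I(X:Y) \leq H(Z) - H(Z\mid U) = I(U:Z)$, where the expectation is taken over the random packing.

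It remains to secure the geometric separation simultaneously. A set of $R \in \exp(\Omega(d^2))$ i.i.d.\ Haar unitaries satisfies the trace-distance lower bound of Lemma~\ref{thm:unitaries}, and hence (through Lemma~\ref{thm:packing_dop}) the $\epsilon/8$ packing property, with overwhelming probability: this follows from concentration of measure on the unitary group applied to each of the $\binom{R}{2}$ pairwise distances together with a union bound, and is exactly the mechanism behind Lemma~\ref{thm:unitaries}. Since the packing event has probability $1 - \exp(-\Omega(d^2))$ while the event $\{I(X:Y) \leq I(U:Z)\}$ has positive probability (any random variable lies at or below its mean with positive probability, and the mean is bounded by $I(U:Z)$), a first-moment/union-bound argument shows the two good events intersect for $R$ sufficiently large, furnishing a single deterministic family of unitaries with both properties.

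The main obstacle is reconciling the averaged inequality with the deterministic packing requirement: the concavity step only controls $I(X:Y)$ \emph{in expectation}, so one must certify that a draw attaining the mean bound can be chosen inside the high-probability packing event. This is where the exponential gap between the bounded mutual information $I(X:Y) \leq \log R = \mathcal{O}(d^2)$ and the exponentially small packing-failure probability $\exp(-\Omega(d^2))$ is essential; the small slack introduced by conditioning on the packing event is harmless, since only the scaling $I(U:Z) = \mathcal{O}(N\epsilon^2/d)$ obtained in the next step is needed to conclude the $\Omega(d^3/\epsilon^2)$ lower bound.
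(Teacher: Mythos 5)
Your core entropy estimate is sound, and it is in fact the same mechanism the paper uses: write $I(X:Y)=H(Y)-H(Y\mid X)$, note that the conditional term averages exactly over Haar draws, and apply concavity of entropy plus Jensen to the marginal term. Where you genuinely diverge from the paper is in how the packing property and the mutual-information bound are made to hold \emph{simultaneously}, and this is where your argument has a real gap. You draw the whole family $U_1,\dots,U_R$ i.i.d.\ Haar and then try to intersect the high-probability packing event with the event $\{I(X:Y)\le I(U:Z)\}$. But ``a random variable lies at or below its mean with positive probability'' is not enough for a union bound: that probability carries no quantitative lower bound and can in principle be far smaller than the $\exp(-\Omega(d^2))$ failure probability of the packing event, so the two events need not intersect on this reasoning alone. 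Your fallback --- conditioning on the packing event $A$ and using $0\le I(X:Y)\le\log R$ --- does repair the logic, since $\mathbb{E}[I(X:Y)\mid A]\le \mathbb{E}[I(X:Y)]/\Pr[A]$, but it only yields a draw in $A$ with $I(X:Y)\le I(U:Z)\big/\big(1-\exp(-\Omega(d^2))\big)$. That multiplicative slack is indeed harmless for the downstream $\Omega(d^3/\epsilon^2)$ bound, but it proves a weakened version of the lemma, not the exact inequality as stated.

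The paper (following Proposition 4.3 of the cited reference, and spelled out in its Appendix~\ref{app:minimal_sample_complexity_dav} analogue) avoids this tension entirely with a rigidity trick your proposal misses: fix \emph{any} $\epsilon/8$-packing $\{U_x\}$ once and for all (Lemmas \ref{thm:unitaries} and \ref{thm:packing_dop}), and randomize only over a \emph{common} left rotation $U_x \mapsto W U_x$ with a single Haar-distributed $W$. Since
\begin{align}
\left\Vert W U_i P U_i^\dagger W^\dagger - W U_j P U_j^\dagger W^\dagger \right\Vert = \left\Vert U_i P U_i^\dagger - U_j P U_j^\dagger \right\Vert,
\end{align}
\emph{every} rotated family is still an $\epsilon/8$-packing deterministically, so no intersection of probabilistic events is needed. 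Moreover $W U_x$ is exactly Haar for each fixed $x$, so your same concavity/Jensen computation gives $\mathbb{E}_W\, I(X:Y_W)\le I(U:Z)$, and the probabilistic method over $W$ alone furnishes one rotation achieving the exact inequality while the packing property holds automatically. If you want to keep your i.i.d.\ construction, use it only to produce \emph{some} packing, then apply this common-rotation step; that converts your approximate conclusion into the lemma as stated.
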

This result allows us to upper bound the mutual information by averaging over Haar-random unitaries, instead of computing it directly from the specific set of unitaries used in the $\epsilon/8$-packing.

Using these results and following Ref.~\cite{lowe2022lower} we obtain an upper bound for the mutual information:
\begin{lemma}\label{thm:upper_bound_mutual_information_dop}
    Let $X \sim Unif([R])$ and $Y = (Y_1, Y_2, \dots Y_N)$ be the outcome of the measurement of a random POVM $E_{U_X}$ of the form of Eq. \eqref{eq:POVM} over $N$ quantum states. Then, 
\begin{align}
    I(X: Y) \leq \frac{N}{\ln(2)} \frac{\epsilon^2}{(d+1)}
\end{align}
\end{lemma}
\begin{proof}
We have
\begin{align}\label{eq:calculation_upper_bound_mutual_information}
    I(X: Y) \leq & I(U : Z) \nonumber \\
    = & \sum_{i=1}^N I(U:Z_i|Z_{i-1}, \dots Z_{1}) \nonumber\\
    \leq & \sum_{i=1}^N I(U:Z_i)  \nonumber \\
    \leq & \frac{N}{\ln (2)}  \mathop{\mathbb{E}}_{U\sim \mathrm{Haar}} \left(\sum_{z_i} \frac{p_{Z_i|U} (z_i)^2}{p_{Z_i}(z_i)} -1\right), 
\end{align}
where the first line follows from Lemma \ref{thm:upper_bound_mutual_information_haar}, the second one from the chain rule for mutual information, the third one from the independence of $Z_1, \dots Z_N$ given $U$ and the final one from Lemma \ref{thm:upper_bound_mi_divergence}. 
Then, we have to calculate the values of $p_{Z_i}(z_i)$ and $\mathop{\mathbb{E}}_{U} p_{Z_i|U} (z_i)^2$.

For POVM elements $L+1$ and $L+2$, the probability  $p_{Z_i}(z_i) = \mathop{\mathbb{E}}_{U} p_{Z_i  |  U} (z_i)$ is of the form
\begin{align}
    p_{Z_i} (z_i) =&  \mathop{\mathbb{E}}_{U} \mathrm{tr}\left(\left(\frac{(1 \pm \epsilon)}{4}\mathds{1} \mp \frac{\epsilon}{2} U P U^\dagger\right) \rho \right) = \frac{1}{4}.
\end{align}
For $i=1\dots L$ we have $p_{Z_i} (z_i) = 1/(2L)$. Next, we have for POVM elements $L+1$ and $L+2$:
\begin{align}
     \mathop{\mathbb{E}}_{U} p_{Z_i |  U}(z_i)^2 =& \mathop{\mathbb{E}}_{U} \mathrm{tr}\left(\left(\frac{(1 \pm \epsilon)}{4}\mathds{1} \mp  \frac{\epsilon}{2} U P U^\dagger\right) \rho \right)^2  \nonumber\\
     &  {\leq} \frac{1}{16} \left(1+ \frac{ {2} \epsilon^2}{(d+1)} \right)
\end{align}
and for $i=1\dots L$ we have $p_{Z_i} (z_i)^2 = 1/(4L^2)$. Then, collecting these results and substituting in Eq.~\eqref{eq:calculation_upper_bound_mutual_information}, we obtain
\begin{align}
    I(X: Y) &\leq \frac{N}{\ln(2)} \left(\sum_{z_i=1}^L  \frac{1}{2 L} +  \left(\frac{1}{2} + \frac{\epsilon^2}{(d+1)} \right) - 1\right) \nonumber\\
    &=\frac{N}{\ln(2)} \frac{\epsilon^2}{(d+1)}
\end{align}
\end{proof}

\subsection{Lower bound on the sample complexity}
Now we show a lower bound on the sample complexity

\begin{theorem}
    Any procedure for quantum measurement tomography of a POVM on a $d$-dimensional Hilbert space that is $\epsilon/16$ accurate in operational distance using nonadaptive, single-copy measurements on known input states requires
    \begin{align}
        N \in \Omega\left( \frac{d^3 + d^2 L}{\epsilon^2} \right)
    \end{align}
    uses of the unknown POVM. 
\end{theorem}

\begin{proof}
Assume a random message is encoded in $N$ copies of a POVM $E_{U_X}$, where this POVM is uniformly sampled from the $\epsilon/8$-packing defined by Lemma \ref{thm:packing_dop}. Let $Y = (Y_1, \dots, Y_N)$ denote the outcomes from measuring $N$ known quantum states using the POVM. Since each POVM in the packing is separated by at least $\epsilon/8$, a tomography algorithm that takes this data and produces an estimate of the POVM within $\epsilon/16$ precision in $d_{\mathrm{op}}$ (with some constant probability) is sufficient to decode the message. Then, tomography must have a sample complexity at least as large as the communication problem. From Lemma \ref{thm:upper_bound_mutual_information_dop}, the mutual information between the measurements used for tomography and the message is upper bounded by 
\begin{align}
    I(X: Y) &\leq \frac{N}{\ln(2)} \frac{\epsilon^2}{(d+1)}.
\end{align}
Using Lemma \ref{thm:fano}, we also have 
\begin{align}
    I(X : Y) &\geq \Omega(d^2).
\end{align}
Combining these results, we obtain
\begin{align}
    \frac{N}{\ln(2)} \frac{\epsilon^2}{(d+1)} &\geq I(X: Y) \geq \Omega(d^2).
\end{align}
Thus, we conclude that
\begin{align}
    N &\geq \Omega \left( \frac{d^3}{\epsilon^2} \right).
\end{align}

The packing above gives the dimension-dependent contribution
$N=\Omega(d^3/\epsilon^2)$. The outcome-dependent contribution
$N=\Omega(d^2L/\epsilon^2)$ follows from the average-distance packing
in Appendix~\ref{app:minimal_sample_complexity_dav}, together with $0.31\,d_{\mathrm{av}}\le d_{\mathrm{op}}$. Combining the
two independent lower bounds gives
\begin{align}
    N=\Omega\!\left(\frac{d^3+d^2L}{\epsilon^2}\right).
\end{align}
\end{proof}

\begin{figure}[t!]
    \includegraphics[width=0.99\linewidth]{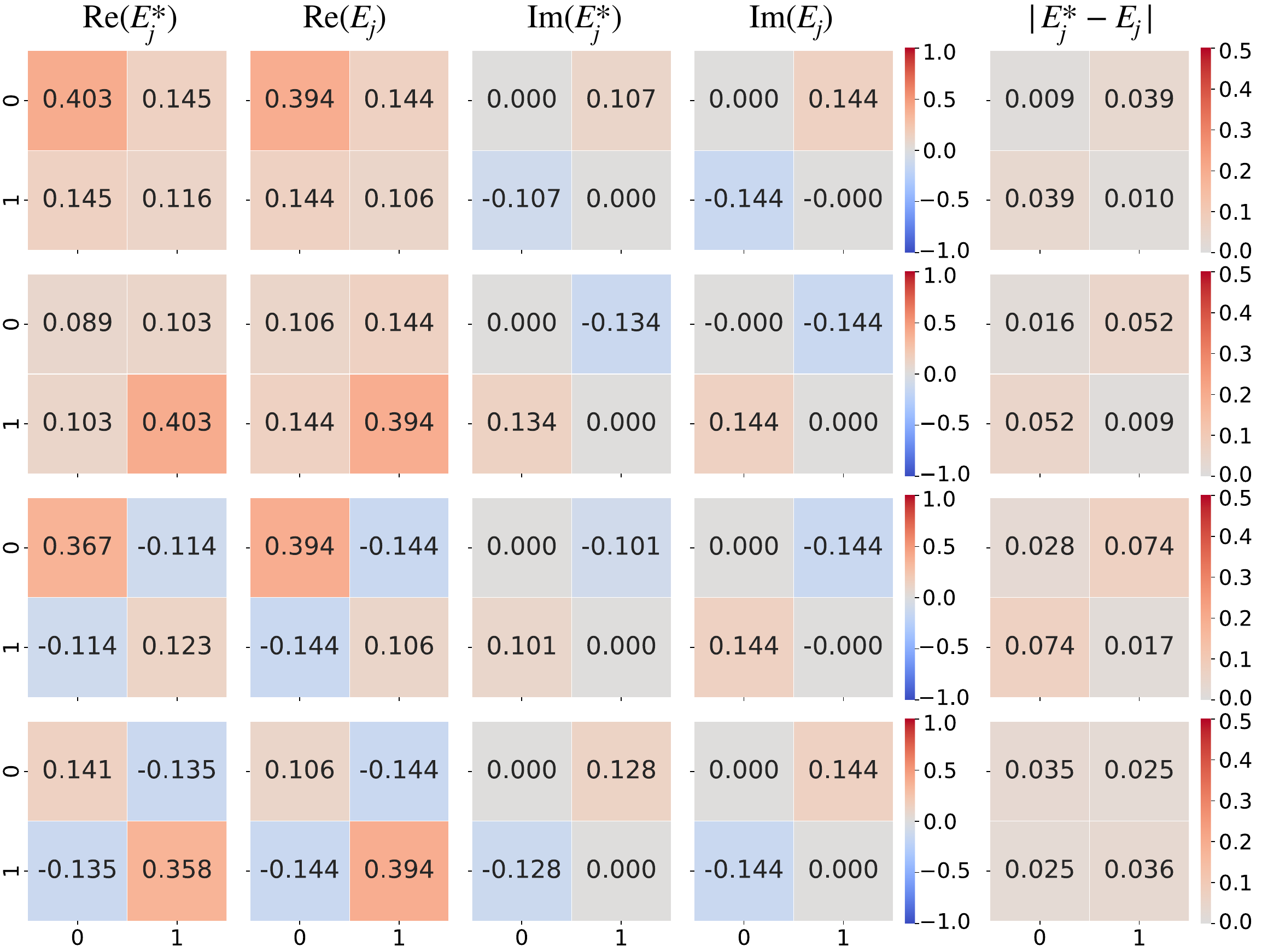}
    \centering
    \caption{Results for the reconstruction of a noisy one-qubit Symmetric Informationally Complete (SIC-)POVM using an auxiliary qubit for the generalized measurement, implemented on a two-qubit flux-tunable transmon device using a budget of $1.66\cdot10^5$ random initial states. The reconstructed noisy POVM is compared to its expected noiseless counterpart, with the absolute difference displayed. Though noisy, the reconstruction follows the target POVM closely. The deviations from the expected values are due to the noisy device implementation.}
    \label{fig:sicpovm}
\end{figure}

\section{Application to error mitigation}\label{sec:application_to_error_mitigation}

In experiments, implemented measurements often deviate from the ideal model due to both classical and non-classical noise. Classical noise can be modeled by a column-stochastic matrix~$\Lambda$ and mitigated by applying~$\Lambda^{-1}$ to the measured probabilities \cite{maciejewski2020mitigation}. Non-classical noise, arising from coherent imperfections in the POVM elements, cannot be corrected by classical post-processing and leads to residual errors. Here we show that combining error mitigation with our tomographic bounds allows one to quantify and correct these errors.

Assume we have an ideal POVM $M_{\mathrm{id}}$, an implemented POVM $M_{\mathrm{exp}}$, and a tomographic estimate $M_{\mathrm{est}}$ of $M_{\mathrm{exp}}$. For any quantum state~$\rho$, we denote the corresponding outcome probability vectors by $p_{\mathrm{id}}(\rho)$, $p_{\mathrm{exp}}(\rho)$, and $p_{\mathrm{est}}(\rho)$. Suppose the estimated POVM admits the decomposition $M_{\mathrm{est}} = \Lambda M_{\mathrm{id}} + \Delta$, where $\Lambda$ is a column-stochastic matrix describing classical outcome mixing, and $\Delta$ collects all remaining (non-classical) imperfections. Our goal is to bound the deviation between the ideal statistics and the classically corrected experimental ones using $\Lambda$, uniformly over all states~$\rho$.

Using the triangle inequality,
\begin{align}
    \|p_{\mathrm{id}} - \Lambda^{-1} p_{\mathrm{exp}}\|_1
    &\le 
        \|p_{\mathrm{id}} - \Lambda^{-1} p_{\mathrm{est}}\|_1 \nonumber \\
    &\quad + \|\Lambda^{-1}(p_{\mathrm{est}} - p_{\mathrm{exp}})\|_1.
\end{align}
Applying the inequality $\|Ax\|_1\le\|A\|_{1\to 1}\|x\|_1$ and taking the maximum over all $\rho$ gives
\begin{align}
    & \max_\rho
    \|p_{\mathrm{id}} - \Lambda^{-1} p_{\mathrm{exp}}\|_1 \nonumber\\
    &\le 
    \|\Lambda^{-1}\|_{1\to 1}
        \max_\rho  \big( \|\Lambda p_{\mathrm{id}} - p_{\mathrm{est}}\|_1
        + \|p_{\mathrm{est}} - p_{\mathrm{exp}}\|_1
    \big).
\end{align}
Then, from the definition of the operational distance, we obtain
\begin{align}
    &\sup_\rho
    \|p_{\mathrm{id}} - \Lambda^{-1} p_{\mathrm{exp}}\|_1 \nonumber \\
    &\le  2\|\Lambda^{-1}\|_{1\to 1}
        \big[
            d_{\mathrm{op}}(\Lambda M_{\mathrm{id}}, M_{\mathrm{est}})
            +
            d_{\mathrm{op}}(M_{\mathrm{est}}, M_{\mathrm{exp}})
        \big].
\end{align}

The first term quantifies how well the classical noise model
$\Lambda M_{\mathrm{id}}$ fits the tomographic reconstruction
$M_{\mathrm{est}}$, and can be calculated classically \cite{maciejewski2020mitigation}, since both terms are known, while the second term represents the tomography error, which can be estimated using the tomographic protocol presented here.

\section{Computational scalability and Numerical Simulations}
\label{sec:numerical_simulations}

To evaluate the computational scalability of the proposed protocol, we benchmark our two-step method against standard MLE. Although our approach separates statistical estimation from constraint enforcement, the exact projection onto the POVM set via semidefinite programming (SDP) can still become computationally demanding. For the operational distance $d_{\mathrm{op}}$, the complexity scales exponentially with the number of outcomes $L$, while for the average-case distance $d_{\mathrm{av}}$ it matches the complexity of MLE. Here, we show that replacing the exact SDP projection with Dykstra's alternating projection method avoids these bottlenecks without loss of reconstruction accuracy.

\subsection{Computational Scalability}
Consider an $L$-outcome POVM acting on a $d$-dimensional Hilbert space, which is described by $\mathcal{O}(d^2L)$ independent real parameters. We compare the computational cost of two reconstruction strategies.

The first is standard MLE with semidefinite constraints \cite{fiuravsek2001maximum}, which maximizes the log-likelihood of the observed frequencies subject to the POVM constraints. Using interior-point methods (for example, CVXPY with the MOSEK solver), the dominant cost arises from assembling and inverting the Hessian at each iteration, yielding a time complexity of $\mathcal{O}(d^6 L^3)$ and a memory complexity of $\mathcal{O}(d^4 L^2)$ \cite{vandenberghe1996semidefinite}.

The second strategy combines LSE with a heuristic projection step in Frobenius norm implemented using Dykstra's algorithm. Instead of solving an exact SDP, this method iteratively projects between the positive semidefinite cone and the sum-to-identity hyperplane \cite{barbera2025boosting}. The dominant operation at each iteration is the eigendecomposition of $L$ matrices of size $d \times d$, resulting in a per-iteration time complexity of $\mathcal{O}(d^3L)$ and space complexity of $\mathcal{O}(d^2 L)$.

\subsection{Numerical Simulations}
\begin{figure*}[t]
    \centering
    \includegraphics[width=0.9\columnwidth]{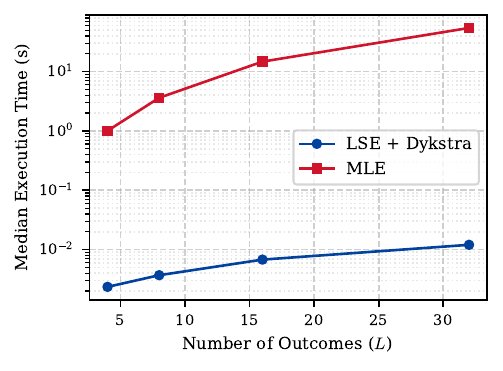}\hfill
    \includegraphics[width=0.9\columnwidth]{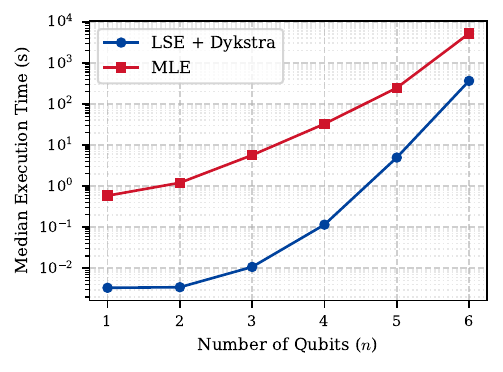} \\
    
    \vspace{-0.1cm} 
    \includegraphics[width=0.9\columnwidth]{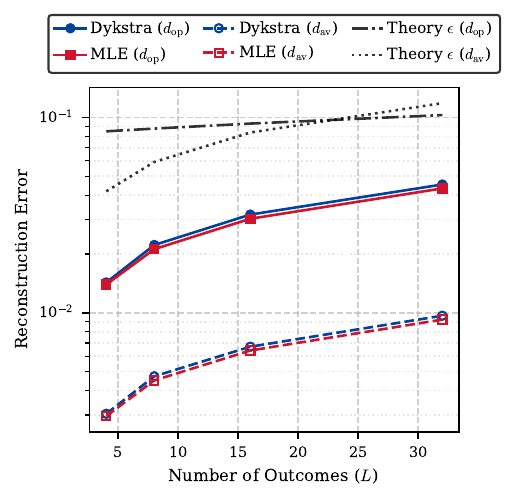}\hfill
    \includegraphics[width=0.9\columnwidth]{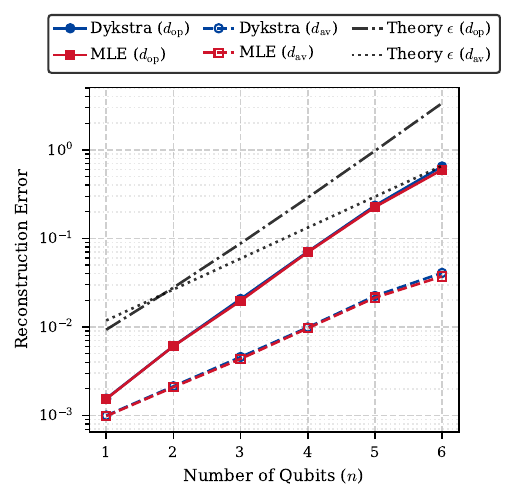}
    
    \caption{Computational efficiency and reconstruction accuracy of the PLS method.
    (Top row) Median classical execution time of our LSE+Dykstra approach compared to standard MLE, scaling with the number of POVM outcomes $L$ for $d=8$ (left), and scaling with the number of qubits $n$ for $L=8$ (right).  
    (Bottom row) Reconstruction errors under the operational distance proxy ($d_{\infty}$, solid lines) and average-case distance ($d_{\mathrm{av}}$, dashed lines) as a function of outcomes $L$ (left) and qubits $n$ (right). Theoretical upper bounds from Thm.~\ref{thm:sc_dop} and Thm.~\ref{thm:upper_bound_d_av} in black. All simulations use $N=10^7$ shots.}
    \label{fig:numerics_master}
\end{figure*}
We evaluate the protocol in two scaling regimes, using a total measurement budget of $N=10^7$ shots. As probe states, we use a local 2-design formed by tensor products of the six single-qubit Pauli eigenstates. First, we fix the system size to $n=3$ qubits and vary the number of outcomes $L \in \{4,8,16,32\}$. Second, we fix the number of outcomes to $L=8$ and scale the system dimension from $n=1$ to $n=6$ qubits.

For each scenario, we generate random target POVMs as follows. We sample $L$ complex matrices $G_j$ with standard normal entries and define positive semidefinite operators $E_j' = G_j G_j^\dagger$. Imposing the completeness relation through $S=\sum_{j=1}^L E_j'$, we construct the POVM elements $E_j = S^{-1/2} E_j' S^{-1/2}$ \cite{wang2021two, barbera2025boosting}. Each shot consists of first sampling a probe state $\rho_i$ uniformly from the
local 2-design and then measuring the unknown POVM. Thus the joint
probabilities are $p_{ij} = \frac{1}{M} \operatorname{tr}(\rho_i E_j)$,
where $M=6^n$ is the number of local 2-design states. The observed counts
$N_{ij}$ are sampled from the multinomial distribution with
probabilities $p_{ij}$, and we define the joint empirical
frequencies $\hat f_{ij}=\frac{N_{ij}}{N}$. Theoretical upper bounds are computed using the analytical expressions derived in Thms.~\ref{thm:sc_dop} and~\ref{thm:upper_bound_d_av} at a confidence level of $1-\delta=0.95$.

Figure~\ref{fig:numerics_master} (top row) displays the median classical runtime over multiple trials as a function of outcomes $L$ and qubits $n$. Standard MLE exhibits steep polynomial scaling, becoming computationally costly for large $L$ and dimension. In contrast, the LSE + Dykstra method is orders of magnitude faster over the same parameter ranges.

Figure~\ref{fig:numerics_master} (bottom row) compares the empirical reconstruction errors of both methods under $d_{\mathrm{op}}$ and  $d_{\mathrm{av}}$. Since evaluating the exact operational distance $d_{\mathrm{op}}$ requires an intractable maximization over all $2^L$ outcome subsets, we instead use the more efficiently computable upper bound $d_\infty = \frac{1}{2} \sum_{j=1}^L \Vert \hat{E}_j - E_j \Vert$. For all evaluated values of $L$ and $n$, Dykstra's algorithm achieves the  same reconstruction quality as full MLE while significantly reducing the computational overhead. The theoretical bounds for both distances are plotted alongside the numerical data, confirming that the observed errors remain consistently below the predicted bounds.

\section{Empirical performance evaluations}\label{experiment}

Here we complement our theoretical findings with empirical performance studies on real quantum hardware.
Target POVMs are prepared using noisy quantum gates and readout. Then, quantum measurement tomography is used to recover the POVM performed on the noisy device. In this section, we present POVMs recovered from a noisy experiment, we showcase the performance of the method in a real scenario and we discuss its potential applications. 

The experiment is performed on two superconducting flux-tunable transmon qubits. The device is controlled via microwave-based arbitrary waveform generators orchestrated and calibrated by open-source middleware \cite{efthymiou2024qibolab, pasquale2024qibocal}.
For further details, we refer to App.~\ref{app:experimental_setup}.

The experiment consists of a two-pronged approach, where data acquisition and analysis are performed separately. Random initial states are generated from the ensemble of single-qubit 2-designs. In this case, the eigenstates of the Pauli operators are used. After $N$ random states are drawn, the outcome statistics of the chosen POVM are aggregated in a vector of relative frequencies $\hat{f}$. 
{These probe states are known and, in the presented experimental setting, efficiently prepared via single qubit pulses acting on the target qubits.}
With the data collected, post-processing starts by a least-squares estimator of each element of the POVM. This estimate, does not generally satisfy the POVM constraints, therefore a projection is needed to obtain the physical object we are interested in.

We choose a set of representative POVMs  to showcase the capabilities of the algorithm. In this section we present results on a single-qubit Symmetric Informationally Complete (SIC-)POVM
that has been realized using one additional auxiliary qubit~\cite{jiang2020sic_circuit, you2024sic_basis}. That is, the setup involves two qubits in total. Appendix~\ref{app:further_results} contains additional performance studies for the two qubit Identity, Hadamard, Bell and Haar random measurements, as well as details on the implementation of the SIC-POVM.

The reconstruction of the experimental SIC-POVM is shown in Fig. \ref{fig:sicpovm}. This has been performed with a budget of $N=1.66\cdot10^5$ initial states, randomly sampled from a single qubit two-design. The reconstruction closely follows the target POVM. We also notice that there is a non-negligible difference between the experimental and target results. This difference is considerably larger than the precision of the reconstruction algorithm. This discrepancy is due to the experimental setup having inherent error in the gate application and readout that our protocol is precise enough to characterize.

\begin{figure}[t!]
    \centering
    \includegraphics[width=0.99\linewidth]{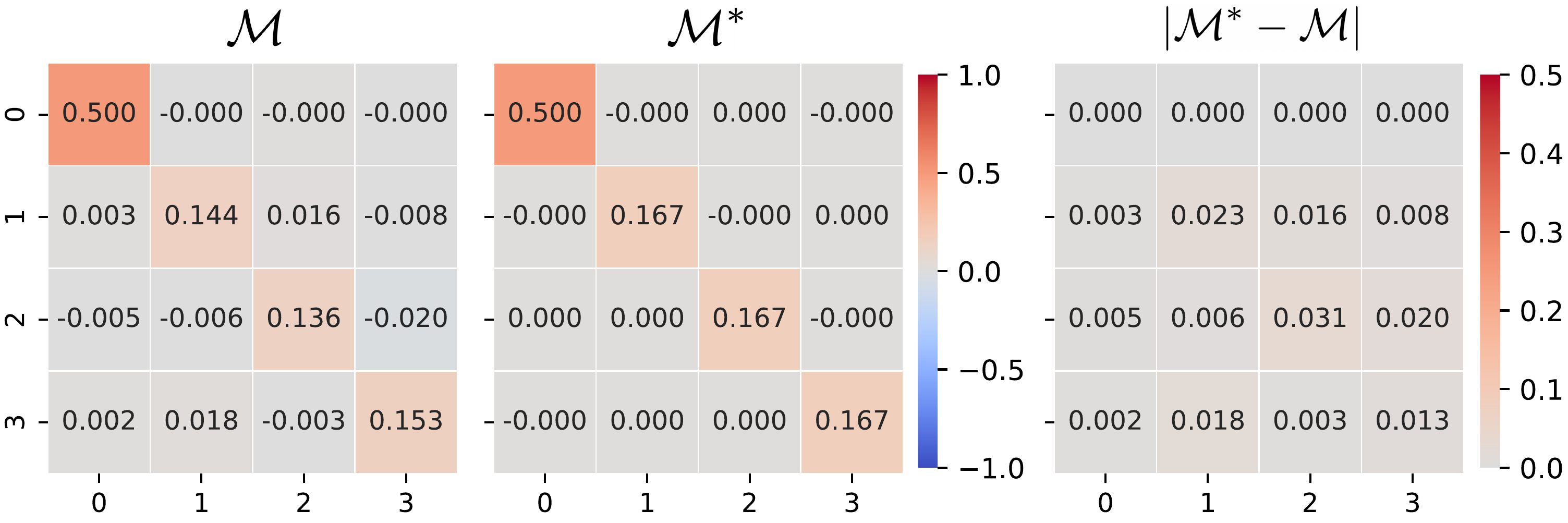
    }
    \caption{Comparison of the half-sided noisy measurement channel with the expected channel for a one-qubit SIC-POVM. The channels are represented in Pauli basis. The absolute difference between them is also shown on the right. The half-sided noisy channel only slightly deviates from the expected results, proportional to the depolarizing channel. This reconstruction of the channel actually implemented in the device can be used to improve results executed on the experimental device.}
    \label{fig:sicmeas}
\end{figure}

This estimated POVM can be used to construct, for example, the half-sided noisy measurement channel. This channel combines the noisy estimation of the POVM with the ideal one. It is computed by using the noisy and exact POVM reconstructions, $\{E_j^*\}_{j=1}^L$ and $\{E_j\}_{j=1}^L$ respectively, in
\begin{equation}
    \mathcal M^* = \sum_{j=1}^L |E_j)(E^*_j|.
\end{equation}
Here the round bra-ket follows the standard Dirac notation with respect to the Hilbert-Schmidt inner product $(A|B) = \mathrm{tr}[A^\dagger B]$ with $A, B$ linear operators \cite{brieger2023cgst}.
This half-sided measurement channel $\mathcal M^*$ for the SIC-POVM is shown in Fig. \ref{fig:sicmeas}, compared to the exact channel. We can see the effects of the noisy implementation on the recovered measurement channel. However, the difference between the expected and experimental results is small, and the proportionality to the effective depolarizing channel is still apparent. This channel is the key to error mitigation protocols for \textit{classical shadows} \cite{huang2020classicalshadow, maciejewski2020mitigation, chen2021robustshadow, koh2022classical, brieger2023cgst}. More precisely, the classical shadow protocol \cite{huang2020classicalshadow} relies on the inversion of the measurement channel of a spherical two design, which is proportional to the effective depolarizing channel. Of course, when implementing this POVM measurements on a noisy device, the resulting channel will not be the expected one. Therefore, the mismatch of the implemented channel and the one inverted will introduce some error on the expectation values of interest. The presented protocol, however, can estimate the noisy experimental POVM, which can then be used in place of the depolarizing channel \cite{chen2021robustshadow}. Several works have tackled this issue by estimating the noisy POVM implementation with different techniques \cite{maciejewski2020mitigation, van2022model, brieger2023cgst}, but our protocol offers both a practical and theoretical advancement on the previous methods, as it targets the noisy POVM implemented directly, making it a clear choice for the implementation of these robust classical shadow methods.

Furthermore, as shown in the previous section, this protocol can be incorporated in calibration and characterization pipelines for quantum devices \cite{maciejewski2020mitigation}. Understanding the error sources of these early devices is crucial to perform useful computations. These can be used to design error mitigation schemes and develop targeted error correction frameworks. Readout error is often decoupled from the rest of the process, our protocol instead provides a complete picture for complex measurements, that includes logical operations and auxiliary space.

\section{Synopsis}

This work addresses the problem of quantum measurement tomography, which aims to reconstruct an unknown POVM based on the statistics obtained from a known set of input states. We propose and analyze a two-step reconstruction protocol that is both computationally fast and provably accurate. The protocol first constructs a linear least-squares estimator of the unknown POVM based on measurement statistics, and then projects this estimator onto the space of valid POVMs. This approach resembles projected least-squares quantum state and process tomography, but is adapted to the measurement problem \cite{guta2020fast, surawy2022projected}.

A key strength of our method is that the least-squares estimator can be computed analytically when the probe states are drawn from specific ensembles. We focus on two prominent choices: global 2-designs and tensor products of local 2-designs. For both ensembles, we provide rigorous, non-asymptotic performance guarantees for our estimator in terms of its average-case and worst-case reconstruction error.

In particular, for a $L$-outcome POVM acting on a $d$-dimensional system, our method requires $\mathcal{O}((d^3 + d^2 L)/\epsilon^2)$ samples to achieve error $\epsilon$ in operational (worst-case) distance with high probability when using global 2-designs, and $\mathcal{O}((12^n + 6^n L)/\epsilon^2)$ samples when using tensor products of single-qubit 2-designs. For average-case distance, the sample complexity is $\mathcal{O}(d^2 L / \epsilon^2)$ and $\mathcal{O}(5^n L / \epsilon^2)$, respectively.

In principle, a POVM can be viewed as a quantum-to-classical channel, meaning full reconstruction could be achieved via standard quantum process tomography using the diamond norm. However, estimating a quantum channel to accuracy $\epsilon$ in diamond distance requires $\mathcal{O}(d_{\mathrm{in}}^{3} L_{\mathrm{out}}^{3}/\epsilon^{2})$ samples, which is substantially larger than our results. The choice between our two proposed metrics presents a trade-off depending on the application. The operational distance $d_{\mathrm{op}}$ requires more samples but provides strict guarantees for all possible states, making it the necessary choice for rigorous error mitigation (as shown in Sec.~\ref{sec:application_to_error_mitigation}). On the other hand, the average-case distance ($d_{\mathrm{av}}$) requires fewer resources and is much better suited for randomized or near-term protocols (such as variational algorithms or classical shadows). In these typical experiments, worst-case bounds might overestimate the errors, making $d_{\mathrm{av}}$ the most efficient and natural choice.

To assess the optimality of our approach, we also derive lower bounds on the sample complexity of any non-collective, non-adaptive QMT protocol. Our proof combines techniques from quantum information theory and classical statistics, such as discretization arguments and Fano's inequality. We show that any protocol aiming to reconstruct an unknown POVM up to error $\epsilon$ must use at least $\Omega((d^3 + d^2L) / \epsilon^2)$ samples for the operational distance and $\Omega(d^2 L / \epsilon^2)$ samples for the average distance. These bounds match the scaling of our upper bounds for both distances, achieving strict optimality in both $d$ and $L$.

Beyond theoretical guarantees, our numerical results confirm that the protocol is practically viable for near-term devices. We show that the empirical reconstruction errors across various system dimensions ($n \leq 6$) and outcome counts ($L \leq 32$) consistently remain below our non-asymptotic upper bounds. Crucially, our approach exhibits stable runtime profiles, offering a computationally superior alternative to standard MLE tomography without a penalty in precision.

Finally, we complement our theoretical results with empirical studies on actual quantum hardware. Our experiments validate the practical performance of our protocol and demonstrate its robustness under real noise conditions. Moreover, we discuss the use of the protocol in the context of error mitigation and characterization of contemporary devices.

\begin{acknowledgments}
LZ was supported by the Government of Spain (Severo Ochoa CEX2019-000910-S, FUNQIP and European Union NextGenerationEU PRTR-C17.I1), Fundació Cellex, Fundació Mir-Puig, Generalitat de Catalunya (CERCA program), the EU Quantera project Veriqtas and the EU and Spanish AEI project QEC4QEA.
SRC would like to acknowledge the calibration team at Technology Innovation Institute (TII) for assistance with the experimental setup.
RK acknowledges financial support from the Austrian Science Fund (FWF) via the START award q-shadows (713361001) and the SFB BeyondC (10.55776/FG7) (research group 7).
\end{acknowledgments}

\bibliographystyle{apsrev4-2}
\bibliography{bib}

@article{temme2017error,
  title = {Error Mitigation for Short-Depth Quantum Circuits},
  author = {Temme, Kristan and Bravyi, Sergey and Gambetta, Jay M.},
  journal = {Phys. Rev. Lett.},
  volume = {119},
  issue = {18},
  pages = {180509},
  numpages = {5},
  year = {2017},
  month = {Nov},
  publisher = {American Physical Society},
  doi = {10.1103/PhysRevLett.119.180509},
  url = {https://link.aps.org/doi/10.1103/PhysRevLett.119.180509}
}

@article{preskill2018quantum,
  doi = {10.22331/q-2018-08-06-79},
  url = {https://doi.org/10.22331/q-2018-08-06-79},
  title = {Quantum {C}omputing in the {NISQ} era and beyond},
  author = {Preskill, John},
  journal = {{Quantum}},
  issn = {2521-327X},
  publisher = {{Verein zur F{\"{o}}rderung des Open Access Publizierens in den Quantenwissenschaften}},
  volume = {2},
  pages = {79},
  month = aug,
  year = {2018}
}

@article{endo2021hybrid,
  title={Hybrid quantum-classical algorithms and quantum error mitigation},
  author={Endo, Suguru and Cai, Zhenyu and Benjamin, Simon C and Yuan, Xiao},
  journal={J. Phys. Soc. Jpn.},
  volume={90},
  number={3},
  pages={032001},
  year={2021},
  publisher={The Physical Society of Japan},
  doi={10.7566/JPSJ.90.032001}
}

@article{kim2023evidence,
  title={Evidence for the utility of quantum computing before fault tolerance},
  author={Kim, Youngseok and Eddins, Andrew and Anand, Sajant and Wei, Ken Xuan and Van Den Berg, Ewout and Rosenblatt, Sami and Nayfeh, Hasan and Wu, Yantao and Zaletel, Michael and Temme, Kristan and others},
  journal={Nature},
  volume={618},
  number={7965},
  pages={500--505},
  year={2023},
  publisher={Nature Publishing Group UK London},
  doi={10.1038/s41586-023-06096-3}
}

@article{li2017efficient,
  title = {Efficient Variational Quantum Simulator Incorporating Active Error Minimization},
  author = {Li, Ying and Benjamin, Simon C.},
  journal = {Phys. Rev. X},
  volume = {7},
  issue = {2},
  pages = {021050},
  numpages = {14},
  year = {2017},
  month = {Jun},
  publisher = {American Physical Society},
  doi = {10.1103/PhysRevX.7.021050},
  url = {https://link.aps.org/doi/10.1103/PhysRevX.7.021050}
}

@article{chen2019detector,
  title = {Detector tomography on {IBM} quantum computers and mitigation of an imperfect measurement},
  author = {Chen, Yanzhu and Farahzad, Maziar and Yoo, Shinjae and Wei, Tzu-Chieh},
  journal = {Phys. Rev. A},
  volume = {100},
  issue = {5},
  pages = {052315},
  numpages = {17},
  year = {2019},
  month = {Nov},
  publisher = {American Physical Society},
  doi = {10.1103/PhysRevA.100.052315},
  url = {https://link.aps.org/doi/10.1103/PhysRevA.100.052315}
}

@article{maciejewski2020mitigation,
  doi = {10.22331/q-2020-04-24-257},
  url = {https://doi.org/10.22331/q-2020-04-24-257},
  title = {Mitigation of readout noise in near-term quantum devices by classical post-processing based on detector tomography},
  author = {Maciejewski, Filip B. and Zimbor{\'{a}}s, Zolt{\'{a}}n and Oszmaniec, Micha{\l{}}},
  journal = {{Quantum}},
  issn = {2521-327X},
  publisher = {{Verein zur F{\"{o}}rderung des Open Access Publizierens in den Quantenwissenschaften}},
  volume = {4},
  pages = {257},
  month = apr,
  year = {2020}
}

@article{bravyi2021mitigating,
  title = {Mitigating measurement errors in multiqubit experiments},
  author = {Bravyi, Sergey and Sheldon, Sarah and Kandala, Abhinav and Mckay, David C. and Gambetta, Jay M.},
  journal = {Phys. Rev. A},
  volume = {103},
  issue = {4},
  pages = {042605},
  numpages = {12},
  year = {2021},
  month = {Apr},
  publisher = {American Physical Society},
  doi = {10.1103/PhysRevA.103.042605},
  url = {https://link.aps.org/doi/10.1103/PhysRevA.103.042605}
}

@article{funcke2022measurement,
  title = {Measurement error mitigation in quantum computers through classical bit-flip correction},
  author = {Funcke, Lena and Hartung, Tobias and Jansen, Karl and K\"uhn, Stefan and Stornati, Paolo and Wang, Xiaoyang},
  journal = {Phys. Rev. A},
  volume = {105},
  issue = {6},
  pages = {062404},
  numpages = {24},
  year = {2022},
  month = {Jun},
  publisher = {American Physical Society},
  doi = {10.1103/PhysRevA.105.062404},
  url = {https://link.aps.org/doi/10.1103/PhysRevA.105.062404}
}

@article{van2022model,
  title = {Model-free readout-error mitigation for quantum expectation values},
  author = {van den Berg, Ewout and Minev, Zlatko K. and Temme, Kristan},
  journal = {Phys. Rev. A},
  volume = {105},
  issue = {3},
  pages = {032620},
  numpages = {8},
  year = {2022},
  month = {Mar},
  publisher = {American Physical Society},
  doi = {10.1103/PhysRevA.105.032620},
  url = {https://link.aps.org/doi/10.1103/PhysRevA.105.032620}
}

@article{korhonen2025practical,
  title={Practical techniques for high-precision measurements on near-term quantum hardware and applications in molecular energy estimation},
  author={Korhonen, Keijo and Vappula, Hetta and Glos, Adam and Cattaneo, Marco and Zimbor{\'a}s, Zolt{\'a}n and Borrelli, Elsi-Mari and Rossi, Matteo AC and Garc{\'\i}a-P{\'e}rez, Guillermo and Cavalcanti, Daniel},
  journal={npj Quantum Inf.},
  volume={11},
  number={1},
  pages={110},
  year={2025},
  publisher={Nature Publishing Group UK London},
  doi={10.1038/s41534-025-01066-1}
}

@article{koh2022classical,
  doi = {10.22331/q-2022-08-16-776},
  url = {https://doi.org/10.22331/q-2022-08-16-776},
  title = {Classical {S}hadows {W}ith {N}oise},
  author = {Koh, Dax Enshan and Grewal, Sabee},
  journal = {{Quantum}},
  issn = {2521-327X},
  publisher = {{Verein zur F{\"{o}}rderung des Open Access Publizierens in den Quantenwissenschaften}},
  volume = {6},
  pages = {776},
  month = aug,
  year = {2022}
}

@article{fiuravsek2001maximum,
  title = {Maximum-likelihood estimation of quantum measurement},
  author = {Fiur\'a\ifmmode \check{s}\else \v{s}\fi{}ek, Jarom\'{\i}r},
  journal = {Phys. Rev. A},
  volume = {64},
  issue = {2},
  pages = {024102},
  numpages = {4},
  year = {2001},
  month = {Jul},
  publisher = {American Physical Society},
  doi = {10.1103/PhysRevA.64.024102},
  url = {https://link.aps.org/doi/10.1103/PhysRevA.64.024102}
}

@article{d2004quantum,
  title = {Quantum Calibration of Measurement Instrumentation},
  author = {D'Ariano, Giacomo Mauro and Maccone, Lorenzo and Presti, Paoloplacido Lo},
  journal = {Phys. Rev. Lett.},
  volume = {93},
  issue = {25},
  pages = {250407},
  numpages = {4},
  year = {2004},
  month = {Dec},
  publisher = {American Physical Society},
  doi = {10.1103/PhysRevLett.93.250407},
  url = {https://link.aps.org/doi/10.1103/PhysRevLett.93.250407}
}

@article{lundeen2009tomography,
  title={Tomography of quantum detectors},
  author={Lundeen, Jeff S and Feito, Alvaro and Coldenstrodt-Ronge, Hendrik and Pregnell, Kenny L and Silberhorn, Ch and Ralph, Timothy C and Eisert, Jens and Plenio, Martin B and Walmsley, Ian A},
  journal={Nat. Phys.},
  volume={5},
  number={1},
  pages={27--30},
  year={2009},
  publisher={Nature Publishing Group UK London},
  doi={10.1038/nphys1133}
}

@article{feito2009measuring,
  title={Measuring measurement: theory and practice},
  author={Feito, Alvaro and Lundeen, JS and Coldenstrodt-Ronge, Hendrik and Eisert, Jens and Plenio, Martin B and Walmsley, Ian A},
  journal={New J. Phys.},
  volume={11},
  number={9},
  pages={093038},
  year={2009},
  publisher={IOP Publishing},
  doi={10.1088/1367-2630/11/9/093038}
}

@article{zhang2012recursive,
  title={Recursive quantum detector tomography},
  author={Zhang, Lijian and Datta, Animesh and Coldenstrodt-Ronge, Hendrik B and Jin, Xian-Min and Eisert, Jens and Plenio, Martin B and Walmsley, Ian A},
  journal={New J. Phys.},
  volume={14},
  number={11},
  pages={115005},
  year={2012},
  publisher={IOP Publishing},
  doi={10.1088/1367-2630/14/11/115005}
}

@article{grandi2017experimental,
  title={Experimental quantum tomography of a homodyne detector},
  author={Grandi, Samuele and Zavatta, Alessandro and Bellini, Marco and Paris, Matteo GA},
  journal={New J. Phys.},
  volume={19},
  number={5},
  pages={053015},
  year={2017},
  publisher={IOP Publishing},
  doi={10.1088/1367-2630/aa6f2c}
}

@article{wang2021two,
  title={Two-stage estimation for quantum detector tomography: Error analysis, numerical and experimental results},
  author={Wang, Yuanlong and Yokoyama, Shota and Dong, Daoyi and Petersen, Ian R and Huntington, Elanor H and Yonezawa, Hidehiro},
  journal={IEEE Trans. Inf. Theory},
  volume={67},
  number={4},
  pages={2293--2307},
  year={2021},
  publisher={IEEE},
  doi={10.1109/TIT.2021.3062596}
}

@article{xiao2023regularization,
  title={On the regularization and optimization in quantum detector tomography},
  author={Xiao, Shuixin and Wang, Yuanlong and Zhang, Jun and Dong, Daoyi and Yokoyama, Shota and Petersen, Ian R and Yonezawa, Hidehiro},
  journal={Automatica},
  volume={155},
  pages={111124},
  year={2023},
  publisher={Elsevier},
  doi={10.1016/j.automatica.2023.111124}
}

@article{cattaneo2023self,
  title={Self-consistent quantum measurement tomography based on semidefinite programming},
  author={Cattaneo, Marco and Rossi, Matteo AC and Korhonen, Keijo and Borrelli, Elsi-Mari and Garc{\'\i}a-P{\'e}rez, Guillermo and Zimbor{\'a}s, Zolt{\'a}n and Cavalcanti, Daniel},
  journal={Phys. Rev. Res.},
  volume={5},
  number={3},
  pages={033154},
  year={2023},
  publisher={APS},
  doi = {10.1103/PhysRevResearch.5.033154}
}

@article{anshu2024survey,
  title={A survey on the complexity of learning quantum states},
  author={Anshu, Anurag and Arunachalam, Srinivasan},
  journal={Nat. Rev. Phys.},
  volume={6},
  number={1},
  pages={59--69},
  year={2024},
  publisher={Nature Publishing Group UK London},
  doi={10.1038/s42254-023-00662-4}
}

@article{zambrano2026quantum,
  title={Quantum tomography for non-iid sources},
  author={Zambrano, Leonardo},
  journal={arXiv:2602.22057},
  year={2026},
  url={https://doi.org/10.48550/arXiv.2602.22057
}
}

@ARTICLE{haah2017sample,
  author={Haah, Jeongwan and Harrow, Aram W. and Ji, Zhengfeng and Wu, Xiaodi and Yu, Nengkun},
  journal={IEEE Trans. Inf. Theory}, 
  title={Sample-Optimal Tomography of Quantum States}, 
  year={2017},
  volume={63},
  number={9},
  pages={5628-5641},
  keywords={Quantum mechanics;Upper bound;Tomography;Measurement uncertainty;Quantum computing;Estimation;Q measurement;State estimation;statistical analysis;channel capacity;information entropy},
  doi={10.1109/TIT.2017.2719044}}

@inproceedings{o2016efficient,
author = {O'Donnell, Ryan and Wright, John},
title = {Efficient quantum tomography},
year = {2016},
isbn = {9781450341325},
publisher = {Association for Computing Machinery},
address = {New York, NY, USA},
url = {https://doi.org/10.1145/2897518.2897544},
doi = {10.1145/2897518.2897544},
booktitle = {Proceedings of the Forty-Eighth Annual ACM Symposium on Theory of Computing},
pages = {899–912},
numpages = {14},
keywords = {Quantum computing, Robinson-Schensted-Knuth correspondence, longest increasing subsequence, tomography},
location = {Cambridge, MA, USA},
series = {STOC '16}
}

@inproceedings{chen2022tight,
  title={Tight bounds for quantum state certification with incoherent measurements},
  author={Chen, Sitan and Li, Jerry and Huang, Brice and Liu, Allen},
  booktitle={2022 IEEE 63rd Annual Symposium on Foundations of Computer Science (FOCS)},
  pages={1205--1213},
  year={2022},
  organization={IEEE},
  doi={10.1109/FOCS54457.2022.00118}
}

@article{guta2020fast,
doi = {10.1088/1751-8121/ab8111},
url = {https://dx.doi.org/10.1088/1751-8121/ab8111},
year = {2020},
month = {apr},
publisher = {IOP Publishing},
volume = {53},
number = {20},
pages = {204001},
author = {M Guţă and J Kahn and R Kueng and J A Tropp},
title = {Fast state tomography with optimal error bounds},
journal = {J. Phys. A: Math. Theor}
}

@article{kueng2017low,
  title={Low rank matrix recovery from rank one measurements},
  author={Kueng, Richard and Rauhut, Holger and Terstiege, Ulrich},
  journal={Appl. Comput. Harmon. Anal.},
  volume={42},
  number={1},
  pages={88--116},
  year={2017},
  publisher={Elsevier},
  doi={10.1016/j.acha.2015.07.007}
}

@article{navascues2014how,
  title = {How Energy Conservation Limits Our Measurements},
  author = {Navascu\'es, Miguel and Popescu, Sandu},
  journal = {Phys. Rev. Lett.},
  volume = {112},
  issue = {14},
  pages = {140502},
  numpages = {5},
  year = {2014},
  month = {Apr},
  publisher = {American Physical Society},
  doi = {10.1103/PhysRevLett.112.140502},
  url = {https://link.aps.org/doi/10.1103/PhysRevLett.112.140502}
}

@article{puchala2018strategies,
  title = {Strategies for optimal single-shot discrimination of quantum measurements},
  author = {Pucha\l{}a, Zbigniew and Pawela, \L{}ukasz and Krawiec, Aleksandra and Kukulski, Ryszard},
  journal = {Phys. Rev. A},
  volume = {98},
  issue = {4},
  pages = {042103},
  numpages = {11},
  year = {2018},
  month = {Oct},
  publisher = {American Physical Society},
  doi = {10.1103/PhysRevA.98.042103},
  url = {https://link.aps.org/doi/10.1103/PhysRevA.98.042103}
}

@ARTICLE{maciejewski2023exploring,
  author={Maciejewski, Filip B. and Puchała, Zbigniew and Oszmaniec, Michał},
  journal={IEEE Trans. Inf. Theory}, 
  title={Exploring Quantum Average-Case Distances: Proofs, Properties, and Examples}, 
  year={2023},
  volume={69},
  number={7},
  pages={4600-4619},
  doi={10.1109/TIT.2023.3250100}}

@article{zambrano2026certification,
  title={Certification of quantum properties with imperfect measurements},
  author={Zambrano, Leonardo and Parella-Dilm{\'e}, Teodor and Ac{\'\i}n, Antonio and Farina, Donato},
  journal={arXiv:2601.16570},
  year={2026},
  url={
https://doi.org/10.48550/arXiv.2601.16570}
}

@article{renes2004symmetric,
  title={Symmetric informationally complete quantum measurements},
  author={Renes, Joseph M and Blume-Kohout, Robin and Scott, Andrew J and Caves, Carlton M},
  journal={J. Math. Phys.},
  volume={45},
  number={6},
  pages={2171--2180},
  year={2004},
  publisher={American Institute of Physics},
  doi={10.1063/1.1737053}
}

@inproceedings{klappenecker2005mutually,
  title={Mutually unbiased bases are complex projective 2-designs},
  author={Klappenecker, Andreas and Rotteler, Martin},
  booktitle={Proceedings. International Symposium on Information Theory, 2005. ISIT 2005.},
  pages={1740--1744},
  year={2005},
  organization={IEEE},
  doi={10.1109/ISIT.2005.1523643}
}

@article{barbera2025boosting,
  title={Boosting projective methods for quantum process and detector tomography},
  author={Barber{\`a}-Rodr{\'\i}guez, J{\'u}lia and Zambrano, Leonardo and Ac{\'\i}n, Antonio and Farina, Donato},
  journal={Physical Review Research},
  volume={7},
  number={1},
  pages={013208},
  year={2025},
  publisher={APS},
  doi={10.1103/PhysRevResearch.7.013208}
}

@article{dankert2009exact,
  title={Exact and approximate unitary 2-designs and their application to fidelity estimation},
  author={Dankert, Christoph and Cleve, Richard and Emerson, Joseph and Livine, Etera},
  journal={Phys. Rev. A},
  volume={80},
  number={1},
  pages={012304},
  year={2009},
  publisher={APS},
  doi={10.1103/PhysRevA.80.012304}
}

@article{Meckes2013spectral,
author = {Elizabeth Meckes and Mark Meckes},
title = {{Spectral measures of powers of random matrices}},
volume = {18},
journal = {ECP},
number = {none},
publisher = {Institute of Mathematical Statistics and Bernoulli Society},
pages = {1 -- 13},
keywords = {Logarithmic Sobolev inequality, spectral measure, Uniform random matrices, Wasserstein distance},
year = {2013},
doi = {10.1214/ECP.v18-2551},
URL = {https://doi.org/10.1214/ECP.v18-2551}
}

@article{lowe2022lower,
  title={Lower bounds for learning quantum states with single-copy measurements},
  author={Lowe, Angus and Nayak, Ashwin},
  journal={arXiv:2207.14438},
  year={2022},
  url={https://arxiv.org/abs/2207.14438}
}

@book{vershynin_2018, 
place={Cambridge}, 
title={High-Dimensional Probability: An Introduction with Applications in Data Science}, 
publisher={Cambridge University Press}, author={Vershynin, Roman}, 
year={2018}, 
collection={Cambridge Series in Statistical and Probabilistic Mathematics},
doi={10.1017/9781108231596}
}

@article{efthymiou2024qibolab,
  title={Qibolab: an open-source hybrid quantum operating system},
  author={Efthymiou, Stavros and Orgaz-Fuertes, Alvaro and Carobene, Rodolfo and Cereijo, Juan and Pasquale, Andrea and Ramos-Calderer, Sergi and Bordoni, Simone and Fuentes-Ruiz, David and Candido, Alessandro and Pedicillo, Edoardo and others},
  journal={Quantum},
  volume={8},
  pages={1247},
  year={2024},
  publisher={Verein zur F{\"o}rderung des Open Access Publizierens in den Quantenwissenschaften},
  doi={10.22331/q-2024-02-12-1247}
}

@article{pasquale2024qibocal,
  title={Qibocal: an open-source framework for calibration of self-hosted quantum devices},
  author={Pasquale, Andrea and Pedicillo, Edoardo and Cereijo, Juan and Ramos-Calderer, Sergi and Candido, Alessandro and Palazzo, Gabriele and Carobene, Rodolfo and Gobbo, Marco and Efthymiou, Stavros and Tan, Yuanzheng Paul and others},
  journal={arXiv:2410.00101},
  year={2024},
  url={https://doi.org/10.48550/arXiv.2410.00101}
}

@article{jiang2020sic_circuit,
  title={Optimal fermion-to-qubit mapping via ternary trees with applications to reduced quantum states learning},
  author={Jiang, Zhang and Kalev, Amir and Mruczkiewicz, Wojciech and Neven, Hartmut},
  journal={Quantum},
  volume={4},
  pages={276},
  year={2020},
  publisher={Verein zur F{\"o}rderung des Open Access Publizierens in den Quantenwissenschaften},
  doi={10.22331/q-2020-06-04-276}
}

@article{you2024sic_basis,
  title={Circuit optimization of qubit {IC}-{POVM}s for shadow estimation},
  author={You, Zhou and Liu, Qing and Zhou, You},
  journal={arXiv:2409.05676},
  year={2024},
  url={https://doi.org/10.48550/arXiv.2409.05676}
}

@article{brieger2023cgst,
  title={Compressive gate set tomography},
  author={Brieger, Raphael and Roth, Ingo and Kliesch, Martin},
  journal={PRX Quantum},
  volume={4},
  number={1},
  pages={010325},
  year={2023},
  publisher={APS},
  doi={10.1103/PRXQuantum.4.010325}
}

@article{chen2021robustshadow,
  title={Robust shadow estimation},
  author={Chen, Senrui and Yu, Wenjun and Zeng, Pei and Flammia, Steven T},
  journal={PRX Quantum},
  volume={2},
  number={3},
  pages={030348},
  year={2021},
  publisher={APS},
  doi={10.1103/PRXQuantum.2.030348}
}

@article{huang2020classicalshadow,
  title={Predicting many properties of a quantum system from very few measurements},
  author={Huang, Hsin-Yuan and Kueng, Richard and Preskill, John},
  journal={Nat. Phys.},
  volume={16},
  number={10},
  pages={1050--1057},
  year={2020},
  publisher={Nature Publishing Group UK London},
  doi={10.1038/s41567-020-0932-7}
}

@article{flammia2012quantum,
  title={Quantum tomography via compressed sensing: error bounds, sample complexity and efficient estimators},
  author={Flammia, Steven T and Gross, David and Liu, Yi-Kai and Eisert, Jens},
  journal={New J. Phys.},
  volume={14},
  number={9},
  pages={095022},
  year={2012},
  publisher={IOP Publishing},
  doi={10.1088/1367-2630/14/9/095022}
}

@article{roth2018recovering,
  title={Recovering quantum gates from few average gate fidelities},
  author={Roth, Ingo and Kueng, Richard and Kimmel, Shelby and Liu, Y-K and Gross, David and Eisert, Jens and Kliesch, Martin},
  journal={Phys. Rev. Lett.},
  volume={121},
  number={17},
  pages={170502},
  year={2018},
  publisher={APS},
  doi={10.1103/PhysRevLett.121.170502}
}

@article{scarlett2019introductory,
  title={An introductory guide to {F}ano's inequality with applications in statistical estimation},
  author={Scarlett, Jonathan and Cevher, Volkan},
  journal={arXiv:1901.00555},
  year={2019},
  url={https://arxiv.org/abs/1901.00555}
}

@article{surawy2022projected,
  title={Projected least-squares quantum process tomography},
  author={Surawy-Stepney, Trystan and Kahn, Jonas and Kueng, Richard and Guta, Madalin},
  journal={Quantum},
  volume={6},
  pages={844},
  year={2022},
  publisher={Verein zur F{\"o}rderung des Open Access Publizierens in den Quantenwissenschaften},
  doi={10.22331/q-2022-10-20-844}
}

@article{nielsen2021gate,
  title={Gate set tomography},
  author={Nielsen, Erik and Gamble, John King and Rudinger, Kenneth and Scholten, Travis and Young, Kevin and Blume-Kohout, Robin},
  journal={Quantum},
  volume={5},
  pages={557},
  year={2021},
  publisher={Verein zur F{\"o}rderung des Open Access Publizierens in den Quantenwissenschaften},
  doi={10.22331/q-2021-10-05-557}
}

@article{pinelis1994optimum,
  title={Optimum bounds for the distributions of martingales in {B}anach spaces},
  author={Pinelis, Iosif},
  journal={Ann. Probab.},
  pages={1679--1706},
  year={1994},
  publisher={JSTOR},
  doi={10.1214/aop/1176988477}
}

@article{martinez2024empirical,
  title={Empirical {B}ernstein in smooth {B}anach spaces},
  author={Martinez-Taboada, Diego and Ramdas, Aaditya},
  journal={arXiv:2409.06060},
  year={2024},
  url={https://doi.org/10.48550/arXiv.2409.06060
}
}

@article{vandenberghe1996semidefinite,
  title={Semidefinite programming},
  author={Vandenberghe, Lieven and Boyd, Stephen},
  journal={SIAM Rev.},
  volume={38},
  number={1},
  pages={49--95},
  year={1996},
  publisher={SIAM}, 
  doi={10.1137/1038003}
}

\appendix
\section{Estimator using $2$-designs}\label{app:2-designs}

\subsection{Least-squares estimator using $2$-designs}
The measurement step in the protocol defines a linear map $\mathcal{M}: M(\mathbb{C}^d) \rightarrow \mathbb{R}^{M}$ from POVM elements to probabilities. The map is defined as
\begin{align}
    [\mathcal{M} (X)]_{i} = \frac{1}{M} \langle \psi_i | X | \psi_i \rangle,
\end{align}
where $\{| \psi_i \rangle \}_{i=1}^{M}$ is a set of states that forms a $2$-design.  In the experiment, we uniformly sample a state from the set, prepare it, and then measure it using a POVM $E$ with elements $(E_1, E_2, \dots E_L)$. After performing several rounds of this procedure, the collected data consists of a vector of relative frequencies $\hat{f} \in \mathbb{R}^{M \times L}$, where
$\hat f_{ij}=\frac{N_{ij}}{N}$ and $N_{ij}$ is the number of shots in which the input state $|\psi_i\rangle$ was sampled and outcome $j$ was observed. The relation between the frequencies and the POVM elements is given by
\begin{align}
    \hat{f}_j = \mathcal{M}(E_j) + \varepsilon,
\end{align}
where $\hat{f}_{j} = (\hat{f}_{1j}, \hat{f}_{2j}, \dots, \hat{f}_{Mj})$ are all the frequencies associated to $E_j$ and $\varepsilon$ represents statistical noise due to the finite sample size.

The task of POVM estimation can be formulated as a least-squares problem. The goal is to find estimators $\hat{E}_j$ that minimize the difference between the observed data $\hat{f}_j$ and the model $\mathcal{M} (E_j)$. The least-squares solution is given by
\begin{align}
    \hat{E}_j = (\mathcal{M}^\dagger \mathcal{M})^{-1} (\mathcal{M}^\dagger (\hat{f_{j}})), \label{eq:lss_general}
\end{align}
where $\mathcal{M}^\dagger: \mathbb{R}^{M} \rightarrow M(\mathbb{C}^d)$ is the adjoint map of $\mathcal{M}$. 

To compute the solution, we leverage the 2-design property of the states $\{| \psi_i \rangle \}_{i=1}^{M}$. Specifically, it can be shown that the map $\mathcal{M}^\dagger \mathcal{M}$ satisfies \cite{guta2020fast}
\begin{align}
    \mathcal{M}^\dagger \mathcal{M} (X) &= \frac{1}{M^2} \sum_{i=1}^M \langle \psi_i | X | \psi_i \rangle |\psi_i \rangle \langle \psi_i | \nonumber \\
    & = \frac{X + \mathrm{tr}(X) \mathds{1}}{d(d+1)M}.
\end{align}
This result simplifies the inversion of $\mathcal{M}^\dagger \mathcal{M}$. The inverse channel is given by
\begin{align}
\left(\mathcal{M}^\dagger \mathcal{M} \right)^{-1} (X) = dM\left((d+1) X - \mathrm{tr}(X) \mathds{1}\right).
\end{align}
Substituting this into the least-squares solution, \eqref{eq:lss_general}, we find that the estimators $\hat{E}_j$ take the form 
\begin{align}
    \hat{E}_j &= \left(\mathcal{M}^\dagger \mathcal{M} \right)^{-1} (\mathcal{M}^\dagger (\hat{f}_j)) \nonumber \\
    &=    \left(\mathcal{M}^\dagger \mathcal{M} \right)^{-1} \left(\frac{1}{M}\sum_{i=1}^M  \hat{f}_{ij} | \psi_i \rangle \langle \psi_i | \right) \nonumber \\
    &= \sum_{i=1}^M  \hat{f}_{ij} \left( d(d+1) |\psi_i \rangle \langle \psi_i | - d \mathds{1} \right). 
\end{align}
Thus, the least-squares estimator provides a closed-form reconstruction of the POVM elements from the observed data.

\subsection{Scalar Bernstein parameters for global $2$-designs}
\label{app:global_2design_scalar_bernstein}

Here we compute the range and variance parameters used there for global complex projective $2$-design probe ensembles.

Fix a unit vector $|u\rangle \in \mathbb{C}^d$ and a sign vector
$a=(a_1,\ldots,a_L)\in\{\pm1\}^L$. Define
\begin{align}
    g_i(u)
    =
    \langle u|\nu_i|u\rangle,
    \qquad
    \nu_i=d(d+1)|\psi_i\rangle\langle\psi_i|-d\mathds{1},
\end{align}
and let $Z_{u,a}$ be the real-valued random variable such that
\begin{align}
    Z_{u,a}(i,j)
    =
    a_j g_i(u),
\end{align}
with probability  $p_{ij} = \frac{1}{M}\langle\psi_i|E_j|\psi_i\rangle$.

Using $\frac{1}{M}\sum_{i=1}^M \langle\psi_i|X|\psi_i\rangle\nu_i = X$, we have
\begin{align}
    \mathbb{E}[Z_{u,a}]
    &=
    \sum_{i=1}^M\sum_{j=1}^L
    \frac{1}{M}
    \langle\psi_i|E_j|\psi_i\rangle
    a_j
    \langle u|\nu_i|u\rangle
    \nonumber\\
    &=
    \sum_{j=1}^L a_j\langle u|E_j|u\rangle .
\end{align}
Since $E$ is a POVM,
\begin{align}
    |\mathbb{E}Z_{u,a}|
    \leq
    \sum_{j=1}^L \langle u|E_j|u\rangle
    =
    1.
    \label{eq:global_expectation_range_app}
\end{align}

We now compute the second moment of $g_i(u)$. Set $x_i=|\langle u|\psi_i\rangle|^2$. Then
\begin{align}
    g_i(u)=d(d+1)x_i-d.
\end{align}
Because $\{|\psi_i\rangle\}_{i=1}^M$ is a complex projective $2$-design,
\begin{align}
    \frac{1}{M}\sum_{i=1}^M x_i
    =
    \frac1d,
    \qquad
    \frac{1}{M}\sum_{i=1}^M x_i^2
    =
    \frac{2}{d(d+1)} .
\end{align}
Therefore
\begin{align}
    \frac{1}{M}\sum_{i=1}^M g_i(u)^2
    &=
    d^2(d+1)^2
    \frac{1}{M}\sum_{i=1}^M x_i^2
      \nonumber \\
     & \quad \quad -
    2d^2(d+1)
    \frac{1}{M}\sum_{i=1}^M x_i
    +
    d^2
    \nonumber\\
    &=
    d^2 .
    \label{eq:global_g_second_moment_app}
\end{align}
Moreover, since $0\leq x_i\leq 1$, $|g_i(u)|\leq d^2$. Then, the centered random variable satisfies
\begin{align}
    |Z_{u,a}-\mathbb{E}Z_{u,a}|
    \leq
    d^2+1 .
    \label{eq:global_centered_range_scalar_app}
\end{align}

The variance of $Z_{u,a}$ satisfies
\begin{align}
    \operatorname{Var}(Z_{u,a})
    &\leq
    \mathbb{E}Z_{u,a}^2
    \nonumber\\
    &=
    \sum_{i=1}^M\sum_{j=1}^L
    \frac{1}{M}
    \langle\psi_i|E_j|\psi_i\rangle
    a_j^2 g_i(u)^2
    \nonumber\\
    &=
    \frac{1}{M}\sum_{i=1}^M g_i(u)^2
    \nonumber\\
    &=
    d^2 ,
    \label{eq:global_variance_scalar_app}
\end{align}
where we used $a_j^2=1$ and $\sum_jE_j=\mathds{1}$.

Thus, for global $2$-design probe states, the scalar Bernstein parameters used in the operational-distance proof are
\begin{align}
    v^2=d^2,
    \qquad
    K=d^2+1 .
\end{align}

\section{Estimator using single-qubit $2$-designs}\label{app:single_qubit-2-designs}

\subsection{Least-squares estimator using single-qubit $2$-designs}
 We now assume we have an $n$-qubit system and that the set of states $\{|\psi_i \rangle\}_{i=1}^M$ we measure is a tensor product of $n$ single-qubit sets $\{|\psi_{i_k} \rangle \}_{i_k=1}^{m}$ that form a $2$-design. Then, we have $M=m^n$ elements of the form
\begin{align}
    |\psi_i \rangle = |\psi_{i_1} \psi_{i_2} \dots \psi_{i_n}\rangle,
\end{align}
Since for each qubit we have a local $2$-design, we can define reduced channels $\mathcal{M}_k: M(\mathbb{C}^2) \rightarrow \mathbb{R}^{m}$ for each qubit $k$. Then, the map $\mathcal{M}_{k}^\dagger \mathcal{M}_{k}$  can be calculated as
\begin{align}
    \mathcal{M}_k^\dagger \mathcal{M}_k (X)  = \frac{X + \mathrm{tr}(X) \mathds{1}}{6m},
\end{align}
with inverse
\begin{align}
    \left(\mathcal{M}_k^\dagger \mathcal{M}_k \right)^{-1} (X) = 2m\left(3 X - \mathrm{tr}(X) \mathds{1}\right).
\end{align}
Following Ref. \cite{guta2020fast}, the total channel is just a tensor product of the local channels, 
\begin{align}
    \mathcal{M}^\dagger \mathcal{M} (X) =  \bigotimes_{k=1}^n  \frac{X + \mathrm{tr}(X) \mathds{1}}{6m}
\end{align}
with inverse
\begin{align}
    \left(\mathcal{M}^\dagger \mathcal{M}  \right)^{-1}(X) &=  \bigotimes_{k=1}^n  2m\left(3 X - \mathrm{tr}(X) \mathds{1}\right) \nonumber \\
    &= M \bigotimes_{k=1}^n \left(6 X - 2\mathrm{tr}(X) \mathds{1}\right).
\end{align}
Applying the formula for the least squares estimator, we obtain
\begin{align}
    \hat{E}_j &= (\mathcal{M}^\dagger \mathcal{M})^{-1} \left(\mathcal{M}^\dagger \left(\hat{f_{j}} \right)\right) \nonumber \\
    &=    \left(\mathcal{M}^\dagger \mathcal{M} \right)^{-1} \left(\frac{1}{M}\sum_{i=1}^M  \hat{f}_{ij} | \psi_i \rangle \langle \psi_i | \right) \nonumber \\
    &=  \sum_{i=1}^M  \hat{f}_{ij} \bigotimes_{k=1}^n\left(6 | \psi_{i_k} \rangle \langle \psi_{i_k} | -  2 \mathds{1}\right). 
\end{align}

\subsection{Scalar Bernstein parameters for tensor products of single-qubit $2$-designs}
\label{app:local_2design_scalar_bernstein}

We now compute the corresponding scalar Bernstein parameters when the probe ensemble is a tensor product of single-qubit $2$-designs. 

As before, fix a unit vector $|u\rangle$ and a sign vector
$a\in\{\pm1\}^L$, Define
\begin{align}
    g_i(u)=\langle u|\nu_i|u\rangle,
    \qquad
        \nu_i
    =
    \bigotimes_{k=1}^n
    \left(
        6|\psi_{i_k}\rangle\langle\psi_{i_k}|
        -
        2\mathds{1}
    \right).
\end{align}
and let $Z_{u,a}$ be the real-valued random variable such that 
\begin{align}
    Z_{u,a}(i,j)=a_jg_i(u)
\end{align}
with probability $p_{ij} = \frac{1}{M}\langle\psi_i|E_j|\psi_i\rangle$.

The expectation is
\begin{align}
    \mathbb{E}[Z_{u,a}]
    =
    \sum_{j=1}^L a_j\langle u|E_j|u\rangle,
\end{align}
and therefore $|\mathbb{E}Z_{u,a}| \leq 1$.

We first bound the range of $g_i(u)$. For each qubit,
\begin{align}
    \left\|
        6|\psi_{i_k}\rangle\langle\psi_{i_k}|
        -
        2\mathds{1}
    \right\|
    =
    4.
\end{align}
Hence
\begin{align}
    |g_i(u)|
    \leq
    \|\nu_i\|
    \leq
    4^n .
    \label{eq:local_g_range_app}
\end{align}
This gives the centered range bound
\begin{align}
    |Z_{u,a}-\mathbb{E}Z_{u,a}|
    \leq
    4^n+1 .
    \label{eq:local_centered_range_scalar_app}
\end{align}

It remains to compute a useful second-moment bound for $g_i(u)$. Let
$\mathcal{P}_n$ denote the $n$-qubit Pauli strings. Then, we can expand
\begin{align}
    |u\rangle\langle u|
    =
    \frac1d\sum_{P\in\mathcal{P}_n} r_P P,
    \qquad
    r_P=\operatorname{tr}\!\left(P|u\rangle\langle u|\right).
\end{align}
Since $|u\rangle\langle u|$ is pure $\sum_{P\in\mathcal{P}_n} r_P^2 = d$
Expand
\begin{align}
    \nu_i
    =
    \sum_{P\in\mathcal{P}_n} c_P(i)P .
\end{align}
Then
\begin{align}
    g_i(u)
    =
    \operatorname{tr}\!\left(|u\rangle\langle u|\,\nu_i\right)
    =
    \sum_{P\in\mathcal{P}_n} r_P c_P(i).
    \label{eq:g_pauli_coeff_app}
\end{align}

For a single qubit, if the local probe state is a Pauli eigenstate
$|\psi_{\alpha,s}\rangle$ with $\alpha\in\{x,y,z\}$ and $s\in\{\pm1\}$, then
\begin{align}
    6|\psi_{\alpha,s}\rangle\langle\psi_{\alpha,s}|-2\mathds{1}
    =
    \mathds{1}+3s\sigma_\alpha .
\end{align}
Averaging uniformly over the six states gives
\begin{align}
    \mathbb{E}[c_{\mathds{1}}^2]=1,
    \quad
    \mathbb{E}[c_{\sigma_\alpha}c_{\sigma_\beta}]
    =
    3\delta_{\alpha,\beta},
    \quad
    \mathbb{E}[c_{\mathds{1}}c_{\sigma_\alpha}]=0 .
\end{align}
Tensorizing over $n$ qubits yields
\begin{align}
    \frac{1}{M}\sum_{i=1}^M c_P(i)c_Q(i)
    =
    \delta_{P,Q}3^{w(P)},
    \label{eq:local_coeff_covariance_app}
\end{align}
where $w(P)$ is the Pauli weight of $P$.

Using Eqs.~\eqref{eq:g_pauli_coeff_app} and
\eqref{eq:local_coeff_covariance_app}, we obtain
\begin{align}
    \frac{1}{M}\sum_{i=1}^M g_i(u)^2
    &=
    \frac{1}{M}\sum_{i=1}^M
    \left(
        \sum_{P\in\mathcal{P}_n} r_Pc_P(i)
    \right)^2
    \nonumber\\
    &=
    \sum_{P,Q\in\mathcal{P}_n}
    r_Pr_Q
    \frac{1}{M}\sum_{i=1}^M c_P(i)c_Q(i)
    \nonumber\\
    &=
    \sum_{P\in\mathcal{P}_n}
    3^{w(P)}r_P^2
    \nonumber\\
    &\leq
    3^n\sum_{P\in\mathcal{P}_n}r_P^2
    \nonumber\\
    &=
    3^n d
    =
    6^n .
    \label{eq:local_g_second_moment_app}
\end{align}

The variance of $Z_{u,a}$ is then bounded as
\begin{align}
    \operatorname{Var}(Z_{u,a})
    &\leq
    \mathbb{E}Z_{u,a}^2
    \nonumber\\
    &=
    \sum_{i=1}^M\sum_{j=1}^L
    \frac{1}{M}
    \langle\psi_i|E_j|\psi_i\rangle
    a_j^2 g_i(u)^2
    \nonumber\\
    &=
    \frac{1}{M}\sum_{i=1}^M g_i(u)^2
    \nonumber\\
    &\leq
    6^n .
    \label{eq:local_variance_scalar_app}
\end{align}
Therefore, for tensor products of single-qubit $2$-designs, the scalar Bernstein parameters used in the operational-distance proof are
\begin{align}
    v^2=6^n,
    \qquad
    K=4^n+1 .
\end{align}

\section{Proof of Thm.~\ref{thm:upper_bound_d_av}}\label{app:proof_thm_4}

In the proof, we will need the following theorem:

\begin{lemma}[Vector Bernstein Inequality \cite{pinelis1994optimum, martinez2024empirical}]\label{lem:vector_bernstein}
    Let $\vec{Z}_1, \dots, \vec{Z}_N$ be independent, zero-mean random vectors in a Hilbert space $\mathcal{H}$. Suppose that $\Vert \vec{Z}_k \Vert_{\mathcal{H}} \leq K$ and $\frac{1}{N} \sum_{k=1}^N \mathbb{E}\Vert \vec{Z}_k \Vert_{\mathcal{H}}^2 \leq \sigma^2$. Then for any $t > 0$,
    \begin{align}
        \mathrm{Pr}\left( \left\Vert \frac{1}{N} \sum_{k=1}^N \vec{Z}_k \right\Vert_{\mathcal{H}} \geq t \right) \leq 2 \exp \left( - \frac{N t^2 / 2}{\sigma^2 + K t / 3} \right).
    \end{align}
\end{lemma}

Then, we can prove the following:

\begin{reptheorem}{thm:upper_bound_d_av}
    Assume an $L$-outcome POVM $E$ acting on a Hilbert space of dimension $d$ and an IC set of states $\{|\psi_i \rangle \}_{i=1}^{M}$. A reconstruction error $d_{\mathrm{av}}(E, E^*) \leq \epsilon$ with probability $1-\delta$ can be achieved using the protocol described above with sample size
    \begin{align}
        N &\geq \frac{ 8L\left(d^2 + d + \frac{d\epsilon}{3\sqrt{L}} \right)}{ \epsilon^2}  \ln \left(\frac{4}{\delta}\right)
    \end{align}
    when $\{|\psi_i\rangle \}_{i=1}^{M}$ forms a $2$-design, or
    \begin{align}
        N \geq \frac{8L\left(5^n + \frac{5^{n/2}\epsilon}{3\sqrt{L}}\right)}{\epsilon^2}  \ln \left(\frac{4}{\delta}\right)
    \end{align}
    if the POVM acts on an $n$-qubit system ($d=2^n$) and $\{|\psi_i\rangle \}_{i=1}^{M}$ is a tensor product of single-qubit $2$-designs.
\end{reptheorem}

\begin{proof}
Let $W_j=\hat E_j-E_j$. We first control the error of the unconstrained
least-squares estimator. Using the definition of $d_{\mathrm{av}}$, extended
to arbitrary Hermitian $L$-tuples, we have
\begin{align}\label{eq:dav_split}
    d_{\mathrm{av}}(E,\hat E)
    &=
    \frac{1}{2d}
    \sum_{j=1}^L
    \sqrt{
        \|W_j\|_F^2+
        \left(\operatorname{tr}W_j\right)^2
    }
    \nonumber\\
    &\le
    \frac{1}{2d}\sum_{j=1}^L\|W_j\|_F
    +
    \frac{1}{2d}\sum_{j=1}^L|\operatorname{tr}W_j| .
\end{align}
    Let $T_1 = \frac{1}{2d} \sum_{j=1}^L \Vert W_j \Vert_F$ and $T_2 = \frac{1}{2d} \sum_{j=1}^L \left| \mathrm{tr}[W_j] \right|$. We will show that, with probability at least $1-\delta$, both terms are at most $\epsilon/4$. This implies $d_{\mathrm{av}}(E,\hat E)\le \epsilon/2$.

    By Cauchy-Schwarz inequality, we have
    \begin{align}\label{eq:T1_CS}
        T_1 \leq \frac{\sqrt{L}}{2d} \sqrt{\sum_{j=1}^L \Vert W_j \Vert_F^2} = \frac{\sqrt{L}}{2d} \Vert \vec{W} \Vert_{\mathcal{H}},
    \end{align}
    where $\vec{W} = (\hat{E}_1 - E_1, \dots, \hat{E}_L - E_L)$ is treated as a single vector in the Hilbert space $\mathcal{H} = (\mathbb{C}^{d \times d})^{\oplus L}$ with the  norm $\Vert \vec{V} \Vert_{\mathcal{H}}^2 = \sum_{j=1}^L \Vert V_j \Vert_F^2$.

The least-squares estimator can be written as $\hat{\vec E} = \frac1N\sum_{k=1}^N \vec X_k$.
For each shot $k$, we sample an input state $|\psi_{i_k}\rangle$ and observe
an outcome $j_k$. The random vector $\vec X_k\in\mathcal H$ has a single
nonzero block, namely $(X_k)_j = \nu_{i_k}\,\mathds 1\{j_k=j\}$.
By construction, $\mathbb E[\vec X_k]=\vec E$, where $\vec E=(E_1,\ldots,E_L)$. Hence
\begin{align}
    \vec W
    =
    \frac1N\sum_{k=1}^N \vec Z_k,
    \qquad
    \vec Z_k:=\vec X_k-\vec E,
\end{align}
and the vectors $\vec Z_k$ are independent and mean-zero.

For a global $2$-design, $\|\vec X_k\|_{\mathcal H}^2 = \|\nu_{i_k}\|_F^2 = d^4+d^3-d^2$.
Moreover,
\begin{align}
    \|\vec E\|_{\mathcal H}^2
    \le
    \sum_{j=1}^L(\operatorname{tr}E_j)^2
    \le
    \left(\sum_{j=1}^L\operatorname{tr}E_j\right)^2
    =
    d^2 .
\end{align}
Consequently,
\begin{align}
    \|\vec Z_k\|_{\mathcal H}
    \le
    \|\vec X_k\|_{\mathcal H}+\|\vec E\|_{\mathcal H}
    \le
    2d^2 ,
\end{align}
for $d\ge2$. Thus, we may take $K=2d^2$ in vector Bernstein.

The variance parameter satisfies
\begin{align}
    \mathbb E\|\vec Z_k\|_{\mathcal H}^2
    &\le
    \mathbb E\|\vec X_k\|_{\mathcal H}^2
    \nonumber\\
    &=
    \sum_{i=1}^M\sum_{j=1}^L
    \frac{1}{M}
    \langle\psi_i|E_j|\psi_i\rangle
    \|\nu_i\|_F^2
    \nonumber\\
    &\le
    d^4+d^3 .
\end{align}
Here we used $\sum_jE_j=\mathds{1}$. Therefore  $\sigma^2=d^4+d^3$

To guarantee $T_1\le\epsilon/4$, it is enough by
Eq.~\eqref{eq:T1_CS} to require $\|\vec W\|_{\mathcal H} \le \frac{d\epsilon}{2\sqrt L}$. Applying vector Bernstein gives
\begin{align}
    \Pr\left(T_1>\frac{\epsilon}{4}\right) \le 2\exp\left(\frac{ - N d^2\epsilon^2/(8L)}{d^4+d^3+d^3\epsilon/(3\sqrt L)}
    \right).
\end{align}
Thus $T_1\le\epsilon/4$ with probability at least $1-\delta/2$ provided
\begin{align}
    N
    \ge
    \frac{
        8L\left(d^2+d+\frac{d\epsilon}{3\sqrt L}\right)
    }{\epsilon^2}
    \ln\left(\frac4\delta\right).
\end{align}

We now bound $T_2$. For both global and local $2$-design frames, $\operatorname{tr}\nu_i=d$. Therefore
\begin{align}
    \operatorname{tr}\hat E_j
    =
    \frac dN\sum_{k=1}^N \mathds 1\{j_k=j\}
    =
    d\hat p_j,
\end{align}
where $\hat p_j$ is the empirical frequency of outcome $j$. Similarly, $p_j =\frac{\operatorname{tr}E_j}{d}$
is the corresponding marginal outcome probability, obtained by averaging
the input state over the $2$-design. Hence
\begin{align}
    T_2
    =
    \frac12\sum_{j=1}^L|\hat p_j-p_j|.
\end{align}
Using Cauchy--Schwarz, $T_2 \le \frac{\sqrt L}{2}\|\hat{\vec p}-\vec p\|_2$. Let $\vec v_k=\vec e_{j_k}-\vec p$. Then $\mathbb E[\vec v_k]=0$,
\begin{align}
    \|\vec v_k\|_2\le2,
    \qquad
    \mathbb E\|\vec v_k\|_2^2\le1 .
\end{align}
Applying vector Bernstein in $\mathbb R^L$ shows that $T_2\le\epsilon/4$ with probability at least $1-\delta/2$ provided
\begin{align}
    N
    \ge
    \frac{
        8L\left(1+\frac{\epsilon}{3\sqrt L}\right)
    }{\epsilon^2}
    \ln\left(\frac4\delta\right).
\end{align}
Taking the maximum of the constraints for $T_1$ and $T_2$ yields the bound $\epsilon/2$ for the global 2-design.

An analogous derivation holds for local 2-designs. The local frame elements satisfy $\Vert \nu_i \Vert_F^2 = 5^n d^2 = \sigma^2$ and $ \|\vec Z_k\|_{\mathcal H}  \leq 20^{n/2} + 2^n \leq  2\; 5^{n/2} d = K$. Substituting these values into the Vector Bernstein bound gives the required sample complexity for tensor products of single-qubit designs.

    Finally, due to the operational properties of the projection step, we have $d_{\mathrm{av}}(E, E^*) \leq 2 d_{\mathrm{av}}(E, \hat{E})$. Therefore, ensuring $d_{\mathrm{av}}(E, \hat{E}) \leq \epsilon/2$ guarantees $d_{\mathrm{av}}(E, E^*) \leq \epsilon$, concluding the proof.
\end{proof}

\section{Minimum sample complexity for $d_{\mathrm{av}}$}\label{app:minimal_sample_complexity_dav}

  Here, we provide a lower bound for the sample complexity of non-adaptive measurement tomography considering $d_{\mathrm{av}}$. The proof is based on the discretization of the problem, reducing quantum measurement tomography to the problem of discrimination of well-separated POVMs.  We first construct a set of $R \in \exp(\Omega(d^2 L))$ POVMs on dimension $d$ that are $\epsilon/4$ apart in $\sqrt{d} \, d_{\mathrm{av}}$ from each other. We then encode a random message using this set and decode it using measurement tomography with sufficient precision. From Fano's inequality, this gives us a lower bound $\Omega(d^2 L)$ for the mutual information between the encoder and decoder. Additionally, we obtain an upper bound $\mathcal{O}(N \epsilon^2 / d)$ for the mutual information between the parties after $N$ uses of the POVM. Using these two results, we derive a bound $N \geq \Omega(d^2 L /\epsilon^2)$ for the sample complexity of any non-adaptive tomographic procedure using $d_\mathrm{av}$.

\subsection{Construction of an $\epsilon$-packing}

Let $\{U_j\}_{j=1}^{L/2}$ be a set of Haar-random unitaries, $P$ a rank-$d/2$ projector and $E_{\{U_j\}}$ a POVM with $L$ elements given by
\begin{align}\label{eq:POVM_dav}
    E^{j} &= \frac{(1-\epsilon)}{L}\mathds{1} + \frac{2\epsilon}{L} U_j P U_j^\dagger \nonumber\\
    E^{j+L/2} & = \frac{(1 + \epsilon)}{L}\mathds{1} - \frac{2\epsilon}{L} U_j P U_j^\dagger
\end{align}
 We want to construct a large set of POVMs of this form such that $\sqrt{d} \,d_{\mathrm{av}} (E_{\{U_j\}}, E_{\{V_j\}}) \geq  \epsilon/4$. To do this,  we will use the following \cite{Meckes2013spectral}:

\begin{theorem}\label{thm:meckes}
Let $k$ and $d$ be positive integers and $M = U_1(d) \times \dots \times U_k(d)$ a space equipped with the $L_2$-sum of Hilbert–Schmidt metrics. Suppose $F:M \rightarrow \mathbb{R}$ is $\kappa$-Lipschitz, and
that $\{U_j \in U_j(d)\}$ are independent, Haar distributed unitary matrices. Then,
for each $t > 0$,
\begin{align}
   \mathrm{Pr} & \left(  F(U_1, \dots, U_k) \geq \mathbb{E} \left[F(U_1, \dots, U_k)\right] + t \right) \nonumber \\
    &\leq \exp \left(- \frac{dt^2}{12 \kappa^2} \right). 
\end{align}
\end{theorem}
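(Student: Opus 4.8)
The plan is to obtain this inequality from the concentration of measure phenomenon on compact Riemannian manifolds of strictly positive Ricci curvature, specialized to the unitary group. The single geometric fact that drives everything is that $U(d)$, viewed as a compact Riemannian manifold under the bi-invariant metric induced by the Hilbert–Schmidt inner product, has Ricci curvature bounded below by a constant that grows \emph{linearly} in $d$, say $\mathrm{Ric} \geq \rho$ with $\rho$ proportional to $d$. This linear growth is exactly what will produce the dimension factor $d$ in the exponent, and fixing the proportionality constant is what pins down the numerical $1/12$.

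First I would treat the single-factor case $k=1$. A standard route is the Bakry–Émery criterion: a lower Ricci bound $\mathrm{Ric} \geq \rho > 0$ implies a logarithmic Sobolev inequality with constant proportional to $1/\rho$, and Herbst's exponential-moment argument then upgrades this into sub-Gaussian concentration, giving $\mathrm{Pr}(F \geq \mathbb{E}F + t) \leq \exp(-\rho t^2/(2\kappa^2))$ for any $\kappa$-Lipschitz $F$. With $\rho$ proportional to $d$ this already has the stated form. An alternative reaching the same conclusion is the Gromov–Lévy isoperimetric comparison with a sphere of matching curvature.

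Next I would pass to the product $M = U_1(d) \times \dots \times U_k(d)$. Since the Ricci tensor of a Riemannian product is the direct sum of the factors' Ricci tensors, the product inherits the same lower bound $\mathrm{Ric} \geq \rho$; equivalently, the logarithmic Sobolev inequality tensorizes with no loss of constant. Because the metric on $M$ is precisely the $L_2$-sum of the Hilbert–Schmidt metrics, a function that is $\kappa$-Lipschitz on $M$ experiences the same effective curvature, so the concentration bound for the product is identical to the one for a single factor. The independence and Haar distribution of the $U_j$ means the underlying probability measure is exactly the normalized Riemannian volume, so the geometric statement transfers verbatim to the probabilistic one.

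The main obstacle I anticipate is not conceptual but the careful bookkeeping of constants: one must fix a normalization of the Hilbert–Schmidt metric, compute the exact Ricci lower bound of $U(d)$ in that normalization, and propagate it through the log-Sobolev constant and Herbst's estimate so that the exponent comes out as precisely $d t^2/(12\kappa^2)$ rather than merely of order $d t^2/\kappa^2$. A related subtlety is the flat central direction of $U(d)$ (the global phase), which carries no curvature; one has to verify that the relevant Lipschitz functional — here $\sqrt{d}\,d_{\mathrm{av}}$ between two of the constructed POVMs — is insensitive to this direction, or else work on $SU(d)$, so that the positively curved semisimple part genuinely controls the concentration.
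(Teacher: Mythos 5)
First, a point of reference: the paper does not prove this statement at all --- it is imported verbatim from the cited work of Meckes and Meckes, so your proposal has to be judged against the proof in that reference, which is indeed the route you sketch (Ricci lower bound $\Rightarrow$ log-Sobolev inequality via Bakry--\'Emery $\Rightarrow$ sub-Gaussian concentration via Herbst, with tensorization over the product factors). Your treatment of the product step is also correct: the Ricci tensor of a Riemannian product is the direct sum of the factors' Ricci tensors, and the log-Sobolev inequality tensorizes without loss.

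The genuine gap is your driving geometric claim. The unitary group $U(d)$ with the bi-invariant Hilbert--Schmidt metric does \emph{not} have a Ricci lower bound growing linearly in $d$; it has no positive lower bound at all. Its Lie algebra contains the central direction $i\mathds{1}$ (the global phase), along which the Ricci curvature vanishes identically, so the best uniform statement is $\mathrm{Ric}\geq 0$, and Bakry--\'Emery applied to $U(d)$ directly yields nothing. (The bound $\mathrm{Ric}=\tfrac{d}{2}\,g$ holds on $SU(d)$, which is where the linear-in-$d$ curvature actually lives.) You do flag this at the end, but neither of your proposed repairs proves the theorem \emph{as stated}, which concerns arbitrary Lipschitz $F$ on $U_1(d)\times\cdots\times U_k(d)$: verifying that the particular functional $\sqrt{d}\,d_{\mathrm{av}}$ is phase-insensitive (it is, since it depends on each $U_j$ only through $U_j P U_j^\dagger$) establishes only the special case the paper needs, while passing to $SU(d)$ proves a statement about a different group. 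The repair used in the cited reference is a coupling: Haar measure on $U(d)$ is the law of $e^{i\theta}V$ with $V$ Haar-distributed on $SU(d)$ and $\theta$ uniform on the interval $[0,2\pi/d)$. A $\kappa$-Lipschitz function of $U$ then becomes a function on $SU(d)\times[0,2\pi/d)$ that is $\kappa$-Lipschitz in $V$ and $\kappa\sqrt{d}$-Lipschitz in $\theta$; because the interval has length only $2\pi/d$, the phase coordinate satisfies a very strong log-Sobolev inequality of its own, and tensorizing it with the curvature-based inequality on $SU(d)$, while tracking both Lipschitz constants, produces the stated bound. This bookkeeping is exactly where the constant $12$ --- which you were unsure how to pin down --- comes from: it is the degradation of the sharper constant available on $SU(d)$ caused by carrying the flat phase direction along. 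With that coupling step inserted, your program goes through; without it, the argument as written does not establish the inequality claimed.
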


This theorem gives us a concentration inequality for $d_{\mathrm{av}}$. Then, if we choose our set of POVMs randomly, it will be $\epsilon$-separated with high probability.

\begin{lemma}\label{thm:packing_dav}
    There exists a set of $R \in \exp(\Omega (d^2 L))$ POVMs that are $\epsilon/4$ apart in the distance $\sqrt{d} d_{\mathrm{av}}$ from each other.
\end{lemma}

\begin{proof}
We want to use Theorem \ref{thm:meckes} for
\begin{align}
    F &= \frac{1}{2d}\sum_{i=1}^{L} \Vert E_{\{U_j\}}^{i} -E_{\{V_j\}}^{i}\Vert_F \nonumber\\
    &= \frac{\epsilon }{d L}\sum_{j=1}^{L} ||U_j P U_j^\dagger - V_j P V_j^\dagger||_F,
\end{align}
since this is a lower bound for $d_{\mathrm{av}}(E_{\{U_j\}}, E_{\{V_j\}})$. Defining $f_j = ||U_j P U_j^\dagger - V_j P V_j^\dagger||_F$, we need to calculate the Lipschitz constant $\kappa$ (from the relation $|F- F'| \leq \kappa \Vert (U_i, V_i) - (U_i', V_i')\Vert$) and $\mathbb{E}\left[\frac{\epsilon}{dL}\sum_{j=1}^{L} f_j\right]$. We start with the former. Using the reverse triangle inequality, we have 
\begin{align}
    &\big|f_j  - f_j' \big|\\
    & \; \leq  \Vert  U_j P U_j^\dagger - V_j P V_j^\dagger - U_j' P U_j'^\dagger + V_j' P V_j'^\dagger \Vert_F  \nonumber \\
   & \; \leq \Vert U_j P U_j^\dagger - U_j' P U_j'^\dagger \Vert_F + \Vert V_j P V_j^\dagger - V_j' P V_j'^\dagger \Vert_F. \nonumber
\end{align}
Then, since for unitaries $W$ and $W'$ we have the relation $\Vert W_j P W_j^\dagger - W_j' P W_j'^\dagger \Vert_F \leq 2 \left\Vert  W_j  - W_j'  \right\Vert_F$ and the unitaries satisfy $W_j = W_{j+L/2}$ for $j=1, \dots L/2$, we obtain 
\begin{align}
    &\left| \sum_{j=1}^{L}f_j  - f_j' \right|  \nonumber\\
    & \quad \leq  4 \sqrt{L} \left(\sum_{j=1}^{L/2}\left\Vert  U_j  - U_j'  \right\Vert_F^2 + \left\Vert  V_j  - V_j' \right\Vert_F^2  \right)^{\frac{1}{2}}.
\end{align}
This gives us an upper bound for the Lipschitz constant, $\kappa \leq \frac{4\epsilon}{d\sqrt{L}}$.

Now, we find a lower bound for the average $\mathbb{E}\left[\frac{1}{L}\sum_{j=1}^{L} f_j\right]$ using \cite{maciejewski2023exploring}
\begin{align}
    \mathop{\mathbb{E}}_{U, V \sim Haar} f_j \geq & \frac{\left(\mathbb{E}_{U, V} f_j^2 \right)^{3/2}}{\left(\mathbb{E}_{U, V} f_j^4 \right)^{1/2}}. 
\end{align}
We have
\begin{align}
    \mathbb{E}  f_j^2 = &  \mathbb{E} \mathrm{tr} \left(\left(  U_j P U_j^\dagger  - V_j PV_j^{\dagger} \right)^2 \right)\nonumber \\
    = & 2 \mathbb{E}\mathrm{tr}(P^2) - 2\mathbb{E} \mathrm{tr}  \left(U_j P U_j^\dagger V_j PV_j^{\dagger} \right) \nonumber \\
    = & 2\left(\mathrm{tr}(P^2) - \frac{\mathrm{tr}(P)^2}{d}\right).
\end{align}
Also, 
\begin{align}
   \mathbb{E}  f_j^4 = &  4 \mathbb{E} \, \left(  \mathrm{tr} \left(\left(  U_j P U_j^\dagger  - V_j PV_j^{\dagger} \right)^2 \right)\right)^2 \nonumber \\
    =& 4 \mathrm{tr} \left( P^2\right)^2 - 8 \frac{ \mathrm{tr}\left(P^2\right) \mathrm{tr}(P)^2}{d} \nonumber \\
    &+ 4\frac{ \mathrm{tr}\left(P\right)^4 + \mathrm{tr}\left(P^2\right)^2}{d^2-1} - 8 \frac{ \mathrm{tr}\left(P^2\right) \mathrm{tr}\left(P\right)^2}{d(d^2-1)}.
\end{align}
Since the projector $P$ has rank $d/2$, we get
\begin{align}
    \mathbb{E}  f_j^2 = \frac{d}{2}, \qquad \mathbb{E}  f_j^4 = \frac{d^4}{4(d^2-1)}.
\end{align}
Then, $\mathbb{E} \left(\frac{\epsilon}{d L}\sum_{j=1}^{L} ||U_j P U_j^\dagger - V_j P V_j^\dagger||_F \right) \geq \frac{\epsilon}{2\sqrt{d}}$.
Using Theorem \ref{thm:meckes}, we have
\begin{align}
    &\mathrm{Pr} \left( \frac{\epsilon}{d L}\sum_{j=1}^{L} ||U_j P U_j^\dagger - V_j P V_j^\dagger||_F \leq \frac{\epsilon}{4\sqrt{d}} \right) \nonumber \\
    &\leq  \exp \left(- \frac{d^2 L}{12\times 64 \times 16} \right). 
\end{align}

Then, using probabilistic arguments, we can build a set of  $R \in \exp(\Omega(d^2 L))$ POVMs 
\begin{align}
    E^{j} &= \frac{(1-\epsilon)}{L}\mathds{1} + \frac{2\epsilon}{L} U_j P U_j^\dagger\nonumber \\
    E^{j+L/2} &= \frac{(1+\epsilon)}{L} \mathds{1}- \frac{2\epsilon}{L} U_j P U_j^\dagger
\end{align}
that satisfy
\begin{align}
     \sqrt{d} \; d_{\mathrm{av}}(E, F)  \geq \frac{\epsilon}{4}. 
\end{align}

\end{proof}

\subsection{Upper bound on mutual information for the average distance}

 \begin{lemma}\label{thm:upper_bound_by_haar}
     Let $X \sim Unif[R]$,  $\{U_j\}_{j=1}^{L/2}$ be a set of independent Haar random unitaries and $\{\rho_i\}_{i=1}^N$ a set of $N$ quantum states. Then, there exists a set of $R \in \exp(\Omega (d^2 L))$ POVMs of the form of Eq. \eqref{eq:POVM_dav} which forms an $\epsilon/4$ packing for $\sqrt{d} \, d_{\mathrm{av}}$ and satisfies 
     \begin{align}
     I(X: Y ) \leq I(\{U_j\}: Z), 
     \end{align}
     where $Y = (Y_1, \dots Y_N)$ is the outcome of measuring $\{\rho_i\}_{i=1}^N$ with a random POVM $E_{\{U_j^{X}\}}$ and $Z = (Z_1, \dots Z_N)$ is the outcome of measuring $\{\rho_i\}_{i=1}^N$ with a random POVM $E_{\{U_j\}}$. 
 \end{lemma}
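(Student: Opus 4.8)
The plan is to follow the probabilistic-method argument already used for the operational-distance analogue, Lemma~\ref{thm:upper_bound_mutual_information_haar}, which is itself adapted from Proposition~4.3 of Ref.~\cite{lowe2022lower}. First I would generate a candidate packing $\mathcal{P}$ by drawing $R$ independent tuples $\{U_j^{(1)}\},\dots,\{U_j^{(R)}\}$, each consisting of $L/2$ independent Haar-random unitaries, and assembling the corresponding POVMs of the form~\eqref{eq:POVM_dav}. The random variable $X\sim\mathrm{Unif}[R]$ selects one tuple, and $Y=(Y_1,\dots,Y_N)$ records the outcomes of measuring $\{\rho_i\}_{i=1}^N$ with $E_{\{U_j^{(X)}\}}$. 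The goal is to show that, on average over the draw of $\mathcal{P}$, the discrete ensemble is no more informative than a single Haar-random POVM, i.e. $\mathbb{E}_{\mathcal{P}}[I(X:Y)]\le I(\{U_j\}:Z)$, and then to combine this with the separation guarantee of Lemma~\ref{thm:packing_dav}.

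The core step is a concavity-of-entropy estimate. Writing
\begin{align}
I(X:Y) = H(Y) - H(Y\mid X),
\end{align}
I would treat the two terms separately. Conditioned on $X=x$ (and on $\mathcal{P}$) the rounds are independent, so $H(Y\mid X)$ is an empirical average over the $R$ tuples of per-tuple entropies; since the tuples are i.i.d.\ Haar, taking the expectation over $\mathcal{P}$ reproduces exactly the Haar conditional entropy, $\mathbb{E}_{\mathcal{P}}[H(Y\mid X)] = H(Z\mid \{U_j\})$. For the output term, the marginal $p_Y$ is the empirical mixture of the $R$ conditional distributions, whose expectation over $\mathcal{P}$ equals the Haar-averaged output law $p_Z$ (because averaging over the uniform index and then over the i.i.d.\ packing is the same as a single Haar draw). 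Concavity of the Shannon entropy together with Jensen's inequality then yields $\mathbb{E}_{\mathcal{P}}[H(Y)] \le H(p_Z) = H(Z)$. Subtracting gives
\begin{align}
\mathbb{E}_{\mathcal{P}}\!\left[I(X:Y)\right] \le H(Z) - H(Z\mid \{U_j\}) = I(\{U_j\}:Z).
\end{align}

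Finally I would merge this bound with the packing construction. Lemma~\ref{thm:packing_dav} shows that a random $\mathcal{P}$ is $\epsilon/4$-separated in $\sqrt{d}\,d_{\mathrm{av}}$ with probability close to one once $R\in\exp(\Omega(d^2L))$ with a sufficiently small constant, while the displayed expectation bound and Markov's inequality guarantee that $I(X:Y)$ does not exceed $I(\{U_j\}:Z)$ (up to a constant arbitrarily close to one) for a large fraction of draws. A union bound over these two events then certifies the existence of a single packing satisfying both requirements simultaneously. The main obstacle I anticipate is the output-entropy step: one must verify carefully that the expectation of the (correlated, multi-round) empirical output distribution over random packings coincides with the genuinely joint Haar law $p_Z$, so that Jensen's inequality applies in the right direction. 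The accompanying constant-factor loss from intersecting the separation and information events is immaterial, since it only rescales the eventual $\Omega(d^2L/\epsilon^2)$ lower bound by a constant.
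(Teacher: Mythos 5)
Your proof is correct, but it routes the probabilistic method through a different source of randomness than the paper, and this changes the bookkeeping at the end. The paper starts from a \emph{fixed} $\epsilon/4$-packing $\{U_j^X\}$ (whose existence is Lemma~\ref{thm:packing_dav}) and randomizes it by a common left-rotation $U_j^X \mapsto W_j U_j^X$ with $\{W_j\}$ independent Haar unitaries; the same concavity-of-entropy/Jensen step you use then gives $\mathbb{E}_W\, I(X:Y_W) \le I(\{U_j\}:Z)$, because by Haar invariance $W_j U_j^X$ is marginally Haar for every $x$. The payoff of this rotation trick is that $\Vert U P U^\dagger - V P V^\dagger \Vert_F$ is invariant under simultaneous conjugation-style rotation, so \emph{every} rotated family is still an $\epsilon/4$-packing with certainty; choosing the rotation that does at least as well as the mean therefore yields a packing satisfying $I(X:Y)\le I(\{U_j\}:Z)$ exactly, with no need to intersect probabilistic events. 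You instead randomize the packing itself (i.i.d.\ Haar tuples), so the separation property holds only with high probability, and you must pay a Markov-plus-union-bound step to get separation and the information bound simultaneously, leaving the slightly weaker conclusion $I(X:Y)\le(1+\alpha)\,I(\{U_j\}:Z)$. Your core computation is sound: $\mathbb{E}_{\mathcal{P}}[H(Y\mid X)] = H(Z\mid \{U_j\})$ by the i.i.d.\ structure, and $\mathbb{E}_{\mathcal{P}}[H(Y)] \le H(Z)$ by Jensen, since the expectation of each conditional law (products over rounds included) is exactly the correlated joint Haar law $p_Z$; your reliance on the proof, rather than the bare statement, of Lemma~\ref{thm:packing_dav} for the with-high-probability separation is also legitimate, and the slack $\alpha$ can be pushed down to order $\exp(-\Omega(d^2L))$, so nothing downstream is affected. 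In short, both arguments share the same entropy-concavity core; the paper's common-rotation device buys deterministic separation and the exact constant, while yours is the more standard random-coding argument at the price of a negligible constant loss.
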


\begin{proof}
    We closely follow Proposition 4.3 in Ref.~\cite{lowe2022lower}. Let $X$ label a POVM of the form of Eq. \eqref{eq:POVM_dav}. The average distance for two POVMs of this form of is invariant under the replacement $U_j^{X} \rightarrow W_j U_j^{X}$ for arbitrary unitary operators $\{W_j\}_{j=1}^{L/2}$, since for two unitaries $U$ and $V$, we have $\Vert U P U^{\dagger} - V P V^{\dagger} \Vert_F = \Vert W U P U^\dagger W^{\dagger } -W V P V^\dagger W^{\dagger } \Vert_F$.  Then, this replacement maps an arbitrary initial $\epsilon/4$-packing to another one. 
    
    Define $Y_{W} = (Y_1, \dots Y_N)$, with $Y_i \in \{1, \dots, L\}$, as the outcome random variable obtained by measuring the $N$ copies $\{\rho_i\}_{i=1}^N$ with $E_{\{ W_j U^{X}_j \}}$. We claim that
    \begin{align}
        \mathop{\mathbb{E}}_{\{W_j \sim \mathrm{Haar}\}} I(X: Y_W) \leq I(\{U_j\} :Z)
    \end{align}
    for a set of independent, Haar distributed unitaries $\{W_j \}_{j=1}^{L/2}$. Let $p_{Y|W, X}$ be the distribution of $Y_W$ given $\{U_j^{X}\}$, with  $p_{Y|W, X}(Y=y) = \prod_{i=1}^N \mathrm{tr} \left(E_{W_{y_i} U_{y_i}^{X}} \rho_i  \right)$. We have
    \begin{align}
        &\mathop{\mathbb{E}}_{\{W_j \sim \mathrm{Haar}\}} I(X: Y_W)  \nonumber \\
        & \; \; = \mathop{\mathbb{E}}_{W_j} H \left(\mathop{\mathbb{E}}_{X \sim [R]} p_{Y|W, X} \right) 
        - \mathop{\mathbb{E}}_{W_j} \mathop{\mathbb{E}}_{X \sim [R]}   H \left( p_{Y|W, X} \right) \nonumber \\
        & \; \; \leq   H \left(\mathop{\mathbb{E}}_{X \sim [R]}    \mathop{\mathbb{E}}_{W_j} p_{Y|W, X} \right) -  \mathop{\mathbb{E}}_{X \sim [R]}    \mathop{\mathbb{E}}_{W_j} H \left( p_{Y|W, X} \right). \label{eq:mutual_inf_haar}
    \end{align}
The first line is just the definition of the mutual information, and the second one follows from the concavity of entropy. 
By the invariance of the Haar measure, we have that $\mathop{\mathbb{E}}_{\{W_j \sim \mathrm{Haar}\}} p_{Y|W, X} = \mathop{\mathbb{E}}_{\{W_j \sim \mathrm{Haar}\}} p_{Y|W}$ and $\mathop{\mathbb{E}}_{\{W_j \sim \mathrm{Haar}\}} H \left( p_{Y|W, X} \right) = \mathop{\mathbb{E}}_{\{W_j \sim \mathrm{Haar}\}} H \left( p_{Y|W} \right)$. The probability distribution of the outcomes of the random POVM $E_{\{U_j\}}$ is defined as  $p_Z = \mathop{\mathbb{E}}_{\{U_j \sim \mathrm{Haar}\}} p_{Y|U} = \mathop{\mathbb{E}}_{\{W_j \sim \mathrm{Haar}\}} p_{Y|W}$. Using Eq.~\eqref{eq:mutual_inf_haar}, this means that 
    \begin{align}
        \mathop{\mathbb{E}}_{\{W_j \sim \mathrm{Haar}\}} I(X: Y_W) \leq I(\{U_j\}:Z). 
    \end{align}
Since the expectation of $I(X: Y_W)$ over all sets of unitary operators $\{W_j\}_{j=1}^{L/2}$ is at most $I(\{U_j\}:Z)$, there exists at least one set of unitary operators $\{ V_j \}_{j=1}^{L/2}$ for which the inequality $I(X: Y_V) \leq  I(\{U_j\}: Z)$ holds. Then we consider the set of POVMs $\{ E^{\{V_i U_i^X\}}\}$. 
\end{proof}

This means that in the calculation of the mutual information we can replace the average over the unitaries that define the $\epsilon/4$-packing by an average over Haar random unitaries to obtain an upper bound. This is given by \cite{lowe2022lower}
\begin{lemma}\label{thm:upper_bound_mutual_information_dav}
    Let $X \sim Unif([R])$ and $Y = (Y_1, Y_2, \dots Y_N)$ be the outcome of the measurement of a random POVM $E_{\{U_X\}}$ of the form of Eq.~\eqref{eq:POVM_dav} over $N$ quantum states. Then, 
\begin{align}
    I(X: Y) \leq \frac{2 N}{\ln(2)} \frac{\epsilon^2}{(d+1)}
\end{align}
\end{lemma}
\begin{proof}
We have
\begin{align}
    I(X: Y) \leq & I(\{ U_j \} : Z) \\
    = & \sum_{i=1}^N I(\{U_j\}:Z_i|Z_{i-1}, \dots Z_{1})\\
    \leq & \sum_{i=1}^N I(\{U_j\}:Z_i) \\
    \leq & \frac{N}{\ln (2)}  \mathop{\mathbb{E}}_{\{ U_j\} } \left(\sum_{z_i} \frac{p_{Z_i|U_{z_i}} (z_i)^2}{p_{Z_i}(z_i)} -1\right), 
\end{align}
where the first inequality follows from Lemma \ref{thm:upper_bound_by_haar}, the second one from the chain rule for mutual information, the third one from the independence of $Z_1, \dots Z_N$ given $\{ U_j \}$ and the final one from Lemma \ref{thm:upper_bound_mi_divergence} and the independence of $\{ U_j \}$.

The probability  $p_{Z_i}(z_i) =  \mathop{\mathbb{E}}_{\{U_j\}} p_{Z_i  |\{U_{z_i}\}} $ is of the form
\begin{align}
    p_{Z_i} (z_i) =&   \mathop{\mathbb{E}}_{U_i} \mathrm{tr}\left(\left(\frac{(1 \pm \epsilon)}{L}\mathds{1} \mp \frac{2\epsilon}{L} U_i P U_i^\dagger\right) \rho \right) \nonumber \\
    =& \frac{1}{L}
\end{align}
For the other term, we have
\begin{align}
     \mathop{\mathbb{E}}_{U_i} p_{Z_i |  U_i}(z_i)^2 =& \mathop{\mathbb{E}}_{U_i} \mathrm{tr}\left(\left(\frac{(1 \pm \epsilon)}{L}\mathds{1} \mp  \frac{2\epsilon}{L} U_i P U_i^\dagger\right) \rho \right)^2  \nonumber\\
     &  {\leq }  \frac{1}{L^2} \left(1+ \frac{ {2} \epsilon^2}{(d+1)} \right).
\end{align}
Then, using Lemma~\ref{thm:upper_bound_mi_divergence}, we obtain
\begin{align}
    I(X: Y) &\leq \frac{N}{\ln(2)} \left(\sum_{z_i=1}^L \frac{1}{L} \left(1+ \frac{ {2} \epsilon^2}{(d+1)} \right) - 1\right) \nonumber\\
    &=\frac{ {2} N}{\ln(2)} \frac{\epsilon^2}{(d+1)}.
\end{align}
\end{proof}

\subsection{Lower bound on the sample complexity of $d_{\mathrm{av}}$}

\begin{theorem}
    Any procedure for quantum measurement tomography of a POVM on a $d$-dimensional Hilbert space that is $\epsilon/8$ accurate in average distance using nonadaptive, single-copy measurements on known input states requires
    \begin{align}
        N \in \Omega\left( \frac{d^2L}{\epsilon^2} \right)
    \end{align}
    uses of the unknown POVM. 
\end{theorem}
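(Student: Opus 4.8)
The plan is to mirror, step by step, the argument already carried out for the operational distance, but using the $d_{\mathrm{av}}$ packing and mutual-information bounds proved in this appendix. The overall strategy is a discretization-plus-Fano reduction: encode a uniformly random message in a POVM drawn from a well-separated family, argue that an accurate tomography procedure decodes the message, and then sandwich the mutual information $I(X:Y)$ between a lower bound from Fano's inequality and an upper bound from the single-copy information-leakage estimate.

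First I would invoke Lemma~\ref{thm:packing_dav} to fix a set of $R \in \exp(\Omega(d^2 L))$ POVMs of the form of Eq.~\eqref{eq:POVM_dav} that are $\epsilon/4$-separated in $\sqrt{d}\,d_{\mathrm{av}}$; equivalently they are $\epsilon/(4\sqrt{d})$-separated in $d_{\mathrm{av}}$ itself. Encoding a uniform message $X \sim \mathrm{Unif}([R])$ in this family and measuring $N$ known input states yields outcomes $Y = (Y_1,\dots,Y_N)$, forming a Markov chain $X \to Y \to \hat{X}$. Since any two distinct POVMs in the packing are separated by $\epsilon/4$ in $\sqrt{d}\,d_{\mathrm{av}}$, a tomography algorithm that is $\epsilon/8$-accurate in $d_{\mathrm{av}}$ — hence accurate to within half the packing radius — lets Bob identify $X$ with constant probability via a nearest-neighbor decoding. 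Lemma~\ref{thm:fano} then forces $I(X:Y) = \Omega(\log R) = \Omega(d^2 L)$.

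Next I would apply the complementary upper bound from Lemma~\ref{thm:upper_bound_mutual_information_dav}, namely
\begin{align}
    I(X:Y) \leq \frac{N}{\ln(2)} \frac{\epsilon^2}{d+1}.
\end{align}
Combining the two bounds gives
\begin{align}
    \frac{N}{\ln(2)} \frac{\epsilon^2}{d+1} \geq I(X:Y) \geq \Omega(d^2 L),
\end{align}
and solving for $N$ yields $N \geq \Omega(d^2 L / \epsilon^2)$, as claimed. The bookkeeping is identical in spirit to the operational-distance theorem; only the packing cardinality ($\exp(\Omega(d^2 L))$ rather than $\exp(\Omega(d^2))$) and the accuracy threshold change.

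The main obstacle is not in this final assembly — which is a short sandwich argument — but rather in ensuring the accuracy thresholds match up cleanly. One must verify that the stated $\epsilon/8$ accuracy in $d_{\mathrm{av}}$ is genuinely sufficient for decoding, given that the separation in Lemma~\ref{thm:packing_dav} is phrased in terms of $\sqrt{d}\,d_{\mathrm{av}} \geq \epsilon/4$, i.e. a $d_{\mathrm{av}}$-separation of $\epsilon/(4\sqrt{d})$. I would therefore be careful to track the $\sqrt{d}$ factor consistently: the scaled distance $\sqrt{d}\,d_{\mathrm{av}}$ is the natural quantity in the packing construction, and the mutual-information upper bound has already absorbed the corresponding dimensional factor, so the two feed into the sandwich with compatible normalizations. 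The genuinely hard work — constructing the packing via the Meckes concentration inequality (Theorem~\ref{thm:meckes}) and computing the Haar moments for the information bound — has already been discharged in the preceding lemmas, so this theorem is essentially their corollary.
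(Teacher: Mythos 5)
Your skeleton matches the paper's (packing from Lemma~\ref{thm:packing_dav}, Fano via Lemma~\ref{thm:fano}, mutual-information bound from Lemma~\ref{thm:upper_bound_mutual_information_dav}), but the $\sqrt{d}$ bookkeeping that you yourself flag as ``the main obstacle'' is handled incorrectly, and your final answer only comes out right because two errors cancel. First, the decoding claim is false as stated: the packing is $\epsilon/(4\sqrt{d})$-separated in $d_{\mathrm{av}}$, so nearest-neighbor decoding requires accuracy below $\epsilon/(8\sqrt{d})$ in $d_{\mathrm{av}}$; an algorithm that is only $\epsilon/8$-accurate in $d_{\mathrm{av}}$ is a factor $\sqrt{d}$ too coarse, and the triangle-inequality argument for identifying $X$ collapses (for $d>4$ the guaranteed gap $\epsilon/(4\sqrt{d})-\epsilon/8$ is already negative). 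Second, the algebra in your sandwich is wrong: $\frac{N}{\ln 2}\frac{\epsilon^2}{d+1} \geq \Omega(d^2 L)$ solves to $N \geq \Omega(d^3 L/\epsilon^2)$, not $\Omega(d^2 L/\epsilon^2)$. Your closing remark that the mutual-information bound ``has already absorbed the corresponding dimensional factor'' is not a valid resolution: the $\epsilon$ appearing in Lemma~\ref{thm:upper_bound_mutual_information_dav} is the same construction parameter of Eq.~\eqref{eq:POVM_dav} that sets the separation, so nothing has been absorbed anywhere.

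The missing step is an explicit reparametrization, which is exactly how the paper closes its proof. The sandwich, done correctly, shows that precision $\epsilon/8$ in the scaled distance $\sqrt{d}\,d_{\mathrm{av}}$ (equivalently, precision $\epsilon/(8\sqrt{d})$ in $d_{\mathrm{av}}$) requires $N \geq \Omega(d^3 L/\epsilon^2)$. Now run the entire argument with the packing of Eq.~\eqref{eq:POVM_dav} built at parameter $\eta = \sqrt{d}\,\epsilon$ (admissible provided $\eta$ stays below the constant allowed by the construction, i.e. $\epsilon \lesssim 1/\sqrt{d}$): the separation in $d_{\mathrm{av}}$ becomes $\eta/(4\sqrt{d}) = \epsilon/4$, so an $\epsilon/8$-accurate algorithm in $d_{\mathrm{av}}$ genuinely decodes; the information bound becomes $I(X:Y) \leq \frac{N}{\ln 2}\frac{d\epsilon^2}{d+1} \leq \frac{N\epsilon^2}{\ln 2}$; and Fano gives $I(X:Y) \geq \Omega(\log R) = \Omega(d^2 L)$ (your $\Omega(d^2 L)$ is what the packing cardinality actually yields; the paper's display ``$\Omega(d^2)$'' at this point is a typo). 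Combining these yields $N \geq \Omega(d^2 L/\epsilon^2)$ for accuracy $\epsilon/8$ in $d_{\mathrm{av}}$, which is the theorem. Without this rescaling, your argument proves a statement about accuracy $\epsilon/(8\sqrt{d})$, not the accuracy $\epsilon/8$ in $d_{\mathrm{av}}$ that the statement asserts.
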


\begin{proof}
    Similar to the calculation of the lower bound for $d_{\mathrm{op}}, $ assume a random message is encoded in $N$ copies of a POVM $E_{\{U_j^X\}}$, where this POVM is uniformly sampled from the $\epsilon/4$-packing defined by Lemma \ref{thm:packing_dav}. Let $Y = (Y_1, \dots, Y_N)$ denote the outcomes from measuring the random POVM over $N$ quantum states. Since each POVM in the packing is separated by at least $\epsilon/4$, a tomography algorithm that takes this data and produces an estimate of the POVM within $\epsilon/8$ precision in $\sqrt{d} \, d_{\mathrm{av}}$ (with some constant probability) is sufficient to decode the message. Then, tomography must have a sample complexity at least as large as the discrimination problem. From the previous section, the mutual information between the measurements used for tomography and the message is upper bounded by 
\begin{align}
    I(X: Y) &\leq \frac{ {2} N}{\ln(2)} \frac{\epsilon^2}{(d+1)}.
\end{align}
Using Lemma \ref{thm:fano}, we also have 
\begin{align}
    I(X : Y) &\geq \Omega(d^2 L).
\end{align}
Combining these results, we obtain
\begin{align}
    \frac{ {2} N}{\ln(2)} \frac{\epsilon^2}{(d+1)} &\geq I(X: Y) \geq \Omega(d^2 L).
\end{align}
Thus, we conclude that
\begin{align}
    N &\geq \Omega \left( \frac{d^3 L}{\epsilon^2} \right)
\end{align}
to obtain precision $\epsilon$ in $\sqrt{d} d_{\mathrm{av}}$. Then, to obtain precision $\epsilon$ in  $d_{\mathrm{av}}$, we need at least
\begin{align}
    N &\geq \Omega \left( \frac{d^2 L}{\epsilon^2} \right). 
\end{align}
This matches the upper bound.
\end{proof}

\section{Experimental details}\label{app:experimental_setup}

The experiment is deployed on a subset of two connected qubits within a 5 qubit chip. It is a star topology flux-tunable superconducting transmon device manufactured by Quantware and hosted in the quantum computing lab of Technology Innovation Institute. The device is driven via microwave pulses generated via a Quantum Machines OPX system, orchestrated by the open source library Qibolab \cite{efthymiou2024qibolab}. The system is calibrated using the available routines on the open source Qibocal \cite{pasquale2024qibocal} package. 

The two qubits are calibrated up to usable specifications, as a reduced amount of noise is desirable for our presented algorithms to reconstruct the noisy POVMs. The two qubits are calibrated before launching the experiment up to the following parameters: the relaxation times for the first and second qubits are $T_1^{(1)}=33.(6)\,\mu s$ and $T_1^{(2)}=30.(6)\,\mu s$ respectively. Equally for the decoherence times $T_2^{(1)}=13.(0)\,\mu s$ and $T_2^{(2)}=9.(1)\,\mu s$. Assignment fidelity for a computational basis readout of $F_{ro}^{(1)}=96.(9)\,\%$ and $F_{ro}^{(2)}=98.(0)\,\%$. Lastly, the single qubit native pulse infidelity, as extracted from a randomized benchmarking experiment, of $g_{inf}^{(1)}=1.3(7)\cdot10^{-3}$ and $g_{inf}^{(2)}=6.(7)\cdot10^{-3}$ for native pulse length of $40\,ns$.

The experiments are designed directly in Qibolab language to leverage the lower level logic of the device. Everything is compiled down to the native pulses and interactions of the target quantum system. Code to reproduce the experiment is available upon reasonable request.

\section{Further results}\label{app:further_results}

\begin{figure}[t!]
    \includegraphics[width=0.99\linewidth]{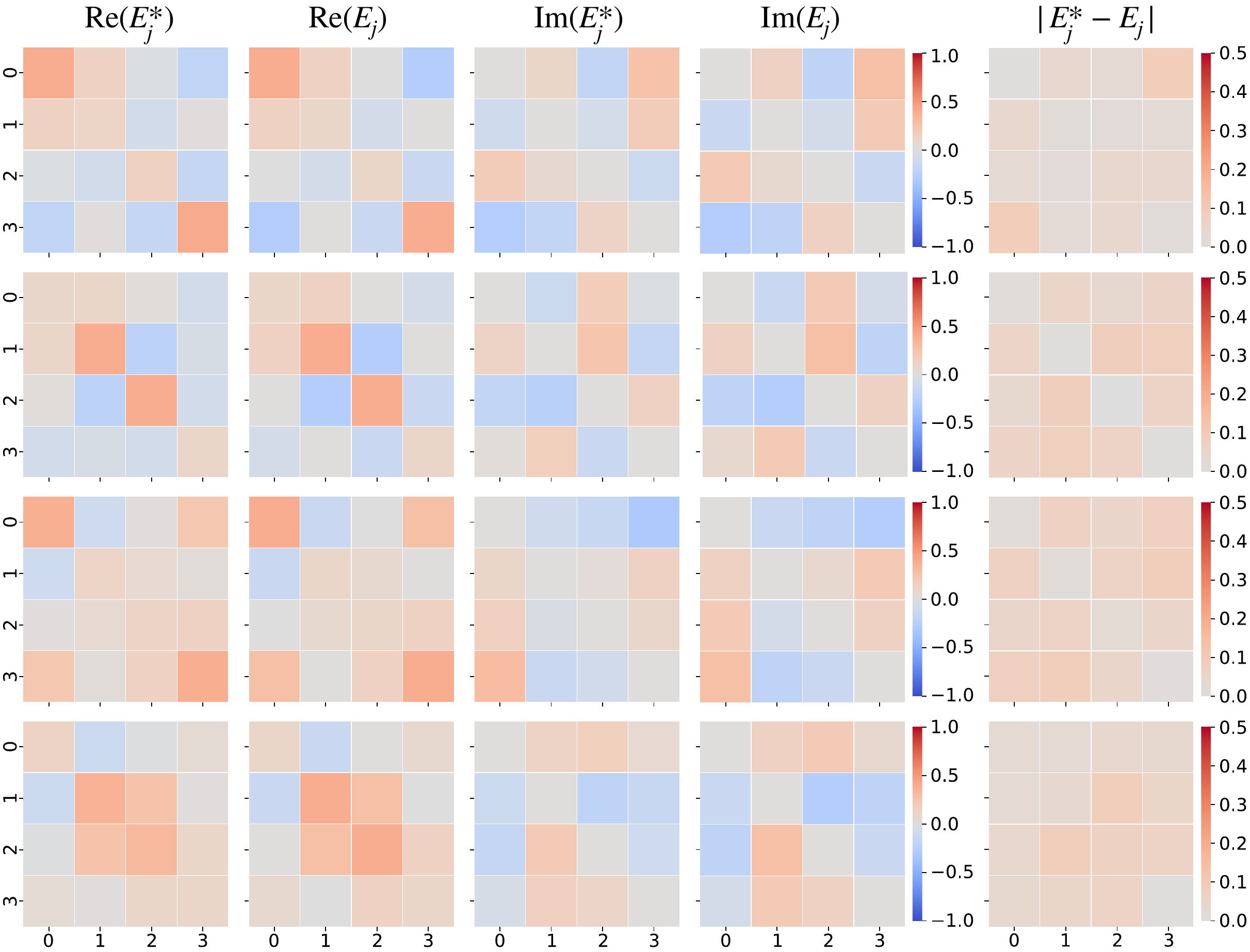}
    \centering
    \caption{Results for the reconstruction of the full system implementing a noisy two-qubit Symmetric Informationally Complete (SIC-)POVM, implemented on a two-qubit flux-tunable transmon device using a budget of $10^6$ random initial states. The reconstructed noisy POVM is compared to its expected noiseless counterpart, with the absolute difference displayed. Though noisy, the reconstruction follows the target POVM closely. The deviations from the expected values are due to the noisy device implementation.}
    \label{fig:app:sic-two}
\end{figure}

Here we expand on the reconstruction of noisy experimental POVM implementation on a two-qubit superconducting flux-tunable transmon system. First we give details on the implementation of the SIC-POVM presented in the main text, such as circuits and further reconstructions. After, we present results on other POVMs, such as the Identity, a Hadamard gates or Haar random gates on both qubits, as well as a Bell measurement.

\begin{figure}[b!]
    \centering
    \resizebox{0.7\linewidth}{!}{\quad\quad\input{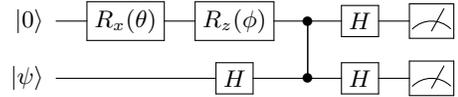}}
    \caption{Quantum circuit depicting the implementation of the SIC-POVM. When the single gate rotation parameters are $\theta=\arccos(1/\sqrt{3})$ and $\phi=3\pi/4$, this results in a Symmetric Informationally Complete measurement. The choice of a CZ gate as the entangling operation is due to the native interactions of the experimental device where this protocol is deployed.}
    \label{fig:app:siccircuit}
\end{figure}

We create the SIC-POVM with a quantum circuit of the form shown in Fig. \ref{fig:app:siccircuit}.
The data qubit, where the initial state will be prepared is the second qubit in Fig. \ref{fig:app:siccircuit}. The parameters for the single qubit rotations are taken as $\theta=\arccos(1/\sqrt{3})$ and $\phi=3\pi/4$ \cite{jiang2020sic_circuit} to generate a SIC-POVM \cite{you2024sic_basis}.
Furthermore, we also characterize this POVM as a two-qubit measurement, instead of a projection into a single qubit system. In Fig. \ref{fig:app:sic-two} we showcase the reconstruction. The results are similar to the ones in the main text, but in an extended space. Indeed, the single qubit POVM is located in the top left quadrant of the shown data.

In Fig \ref{fig:app:further-results} we collect all the data from the other POVM settings. First, on the top left are the results for the Identity POVM, that is, a direct measurement in the computational basis. While a very straightforward POVM, this reconstruction provides a great deal of information about the readout process of the device. For example, the non-symmetry in readout error when measuring different final states can be extracted from the data shown in this figure. On the top right, we plot the reconstruction results of a POVM measurement with a Hadamard gate before the readout, that is a measurement in the X basis --- where we consider the usual convention of the computational basis being the Z direction. The bottom-left panel showcases a similar scenario, but with a single qubit Haar random gate before the readout. Unlike the other instances, in this one we notice how the reconstructed POVM is randomly spread across the real and imaginary space, without a directly apparent structure. Finally, at the bottom right corner of Fig. \ref{fig:app:further-results} the results of the reconstruction of a noisy Bell POVM are shown. 

\begin{figure*}[t!]
    \includegraphics[width=0.99\linewidth]{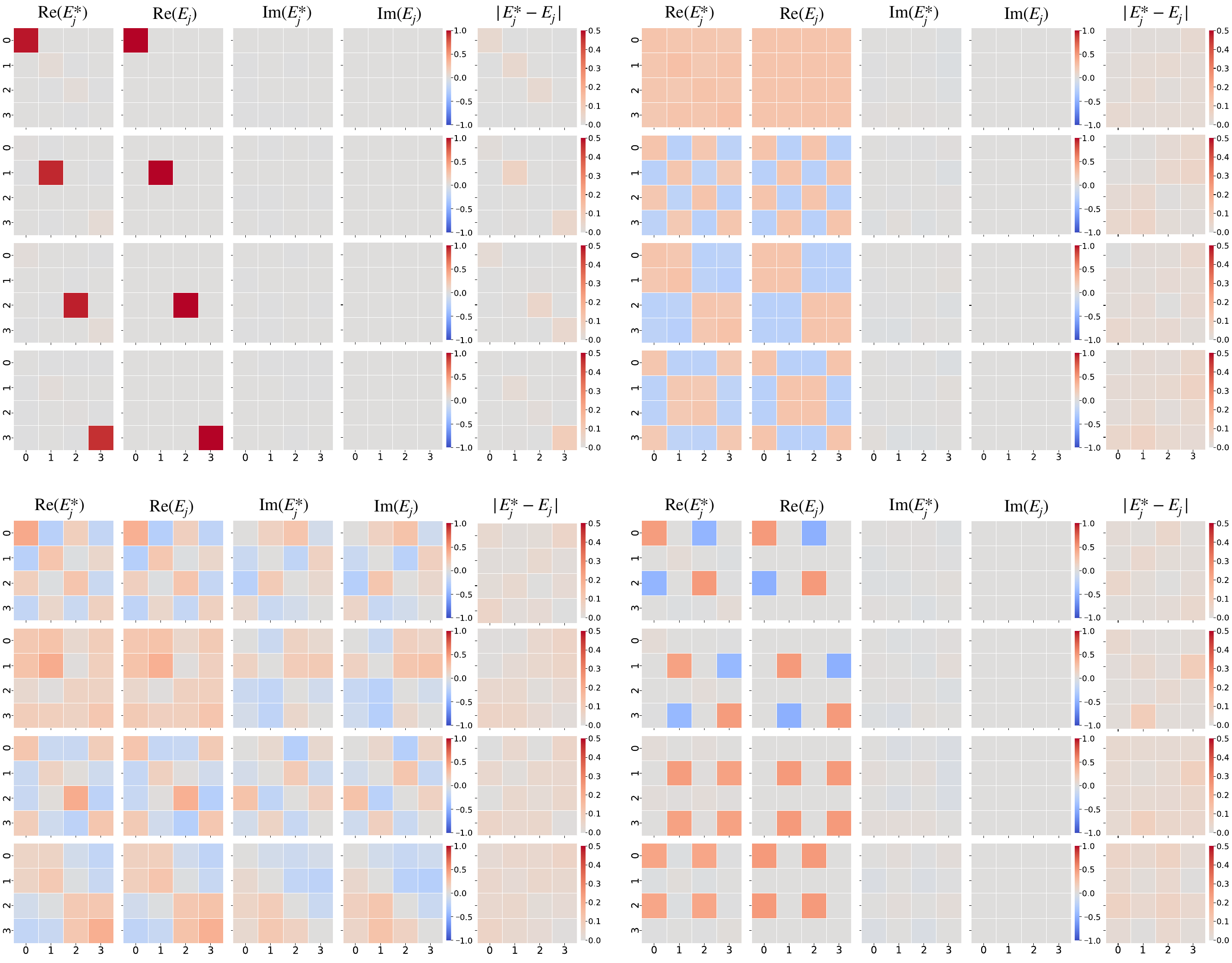}
    \centering
    \caption{Results for the reconstruction of various noisy two-qubit POVMs, implemented on a two-qubit flux-tunable transmon device using a budget of $10^6$ random initial states. The reconstructed noisy POVMs are compared to its expected noiseless counterpart, with the absolute difference displayed. Though noisy, the reconstruction follows the target POVMs closely. The deviations from the expected values are due to the noisy device implementation. More precisely, (top left) depicts the Identity POVM, where only a measurement in the computational basis is performed on both qubits. (top right) depicts a POVM where a Hadamard gate is applied to each qubit before measurement, that is, a change of basis from Z to X measurement basis. (bottom left) POVM where a Haar random gate is implemented before the measurement gate, we notice the how even with a more complex measurement the noisy implementation follows closely the noiseless simulation. Finally, on (bottom right) a Bell type POVM is reconstructed, where unlike the previous three POVMs depicted in this figure, an entangling gate is performed between the two qubits.}
    \label{fig:app:further-results}
\end{figure*}

\end{document}